    \newcommand{\demph}{\textbf}
    \newcommand{\sdzero}{\texttt0}
    \newcommand{\sdone}{\texttt1}
    \newcommand{\str}{\mathbf}
    \newcommand{\length}[1]{|#1|}
    \newcommand{\NN}{\ensuremath{\mathbb N}\xspace}
    \newcommand{\timef}{\operatorname{time}}
    \newcommand{\p}{\ensuremath{\mathbf P}\xspace}
    \newcommand{\opt}{\ensuremath{\mathbf{OPT}}\xspace}
    \newcommand{\spt}{\ensuremath{\mathbf{SPT}}\xspace}
    \newcommand{\mpt}{\ensuremath{\mathbf{MPT}}\xspace}
    \newcommand{\B}{\ensuremath{\mathcal B}\xspace}
    \newcommand{\bff}{\ensuremath{\mathbf{BFF}}\xspace}
    \newcommand{\pf}{\mathrm{P}}
    \renewcommand{\S}{\Sigma^*}
    \newcommand{\e}{\mathtt{\epsilon}}
    \newcommand{\tup}[1]{\langle #1 \rangle}
    \newcommand{\onesec}[1]{#1_1}
    \newcommand{\twosec}[1]{#1_2}
    \newcommand{\Rec}{\mathcal{R}}
    \DeclareMathOperator*{\argmax}{arg\,max}
\newcounter{oldtocdepth}
\newcommand{\hidefromtoc}{%
  \setcounter{oldtocdepth}{\value{tocdepth}}%
  \addtocontents{toc}{\protect\setcounter{tocdepth}{-10}}%
}
\begin{document}

\title{Type-two Polynomial-time and Restricted Lookahead\tnoteref{t1,t2}}
\tnotetext[t1]{\copyright\ 2020. This manuscript version is made available under the CC-BY-NC-ND 4.0 license http://creativecommons.org/licenses/by-nc-nd/4.0/}
\tnotetext[t2]{{https://doi.org/10.1016/j.tcs.2019.07.003}}

\author[uvic]{Bruce M. Kapron\fnref{fn1}}
\ead{bmkapron@uvic.ca}
\author[inria]{Florian Steinberg\fnref{fn2}}
\ead{florian.steinberg@inria.fr}

\fntext[fn1]{Supported by an NSERC Discovery Grant}
\fntext[fn2]{Supported by the ANR project\emph{FastRelax}(ANR-14-CE25-0018-01) of the French National Agency for Research}
\address[uvic]{Computer Science Department, University of Victoria, Victoria, BC, Canada V8W 2Y2}
\address[inria]{INRIA Saclay, 91120 Palaiseau, France}

\begin{abstract}
    This paper provides an alternate characterization of type-two polynomial-time computability, with the goal of making second-order complexity theory more approachable.
    We rely on the usual oracle machines to model programs with subroutine calls.
    In contrast to previous results, the use of higher-order objects as running times is avoided, either explicitly or implicitly.
    Instead, regular polynomials are used.
    This is achieved by refining the notion of oracle-polynomial-time introduced by Cook.
    We impose a further restriction on the oracle interactions to force feasibility.
    Both the restriction as well as its purpose are very simple:
    it is well-known that Cook's model allows polynomial depth iteration of functional inputs with no restrictions on size, and thus does not guarantee that polynomial-time computability is preserved.
    To mend this we restrict the number of lookahead revisions, that is the number of times a query can be asked that is bigger than any of the previous queries.
    We prove that this leads to a class of feasible functionals and that all feasible problems can be solved within this class if one is allowed to separate a task into efficiently solvable subtasks.
    Formally put: the closure of our class under lambda-abstraction and application includes all feasible operations.
    We also revisit the very similar class of strongly polynomial-time computable operators previously introduced by Kawamura and Steinberg.
    We prove it to be strictly included in our class and, somewhat surprisingly, to have the same closure property.
    This can be attributed to properties of the limited recursion operator:
    It is not strongly polynomial-time computable but decomposes into two such operations and lies in our class.
\end{abstract}

\begin{keyword}
computability in higher types; feasible functionals; type-two polynomial time; oracle Turing machine; applied lambda-calculus; recursion on notation
\end{keyword}
\date{April 1, 2020}
\maketitle

\tableofcontents
\section{Introduction}
In the setting of ordinary computability theory, where computation is performed on finite objects (e.g., numbers, strings, or combinatorial objects such as graphs) there is a well-accepted notion of computational feasibility, namely poly\-no\-m\-ial-time computability.
The extended Church-Turing thesis codifies the convention: the intuitive notion of feasibility is captured by the formal model of computability by a polynomial-time Turing machine.\footnote{Ignoring issues related to randomization or quantum computing.}
From a programming perspective this can be interpreted as a formal definition of a class of programs that should be considered fast.
Of course this theory only applies to programs whose execution is determined from a finite string that is considered the input.
In practice, software often relies on external libraries or features user interaction.
One may address this by moving to a setting where a Turing machine does not only act on finite inputs but additionally interacts with \lq infinite inputs\rq.
This leads to the familiar \demph{oracle Turing machine} model, where infinitary inputs are presented via an oracle that can be fed with and will return finite strings, so that only finite information about the oracle function is available at any step of the computation.
The word oracle is used as no assumptions about the process to produce the values are made.
In particular, the oracle provides return values instantly.
From the software point of view this means judging the speed of a program independently of the quality of libraries or lazy users.
Since the oracle can be understood as type-one input and oracle machines to compute type-two functions, the investigation of resource consumption in this model is called second-order complexity theory.

Can a sensible account of feasible computation be given in this model?
If so, can it be kept consistent with the familiar notion of polynomial-time for ordinary Turing machine computation and the more traditional way of using oracle machines with $\sdzero$-$\sdone$ valued oracles \cite{Cook71}?
These problems were first posed by Constable in 1973 \cite{Constable73}.
He proposed only potential solutions and the task was taken up again by Mehlhorn in 1976, who gave a fully-formulated model \cite{MR0411947}.
This model is centered around Cobham's scheme-based approach to characterizing polynomial-time \cite{Cobham65}.
While such scheme based approaches are very valuable from a theoretical point of view \cite{MR2048061}, for some applications it may be desirable to have  a characterization that relies on providing bounds on resource consumption in a machine-based model.
Indeed, Mehlhorn related his formulation to the oracle machine model by proving that it satisfies the \demph{Ritchie-Cobham property}:
a functional is in his class if and only if  there is an oracle machine and a bounding functional from the class such that for any inputs the machine computes the value of the functional in question and has a running time bounded by the size of the bounding functional.
The impredicative nature of Mehlhorn's OTM characterization left open the possibility of a characterisation based more closely on the type-one model of polynomial-time Turing machines.

Only in 1996 did Kapron and Cook show that it is possible to give such a characterisation by relying on the notions of \demph{function length} and \demph{second-order polynomials} \cite{MR1374053}.
The resulting class of \demph{basic polynomial-time functionals} was proved equal to Mehlhorn's, providing evidence of its naturalness and opening the way for applications in diverse areas.
A representative but by no means exhasutive list includes work in computable analysis \cite{KawamuraC12}, programming language theory \cite{MR2295797}, NP search problems \cite{BeameCEIP98}, and descriptive set theory \cite{MR1058425}.
The model was also used as a starting point for understanding how complexity impacts classical results on computability in higher types \cite{MR1911553,MR2075336,MR1463765}.

Ideas similar to those used by Kapron and Cook were used for a number of logical characterizations of the basic polynomial-time functionals supporting appropriateness of the class.
Works using logics based on bounded arithmetic \cite{MR2053401,MR2103646} rely on implicit representations of second-order polynomials.
A drawback of the Kapron-Cook approach is that length functions and second-order polynomials are not particularly natural objects to work with.
For instance, the length of a function -- which can be viewed as the most basic example of a second-order polynomial -- is not feasible.
This has direct implications for the applications.
The most common approach to avoid technical difficulties is to restrict to length-monotone oracles \cite{KawamuraC12,MR1058425}.
This corresponds to using only a fragment of second-order complexity theory and may in turn lead to technical difficulties.

Additional support for Mehlhorn's class and insight in its structure came from initial doubts that it was broad enough to include all type-two functionals that should be considered feasible.
Cook formulated a notion of \demph{intuitive feasibility}, and pointed out that a type-two \demph{well quasi-ordering functional}, which meets the criteria of intuitive feasibility, is not in Mehlhorn's class \cite{MR1236005}.
Subsequent work uncovered a number of shortcomings of the notion of intuitive feasibility.
Seth provided a class that satisfies the conditions but has no recursive presentation \cite{SethRec} and also proved that Cook's functional does not preserve the Kalmar elementary functions \cite{MR1238294}.
Attempts by Seth and later by Pezzoli to formulate further restrictions on intuitively feasible functionals to avoid noted pitfalls lead back to Mehlhorn's class \cite{MR1727821}.

Cook's intuitive feasibility uses the notion of \demph{oracle polynomial-time}, which is formulated using ordinary polynomials.
A POTM (for \lq polynomial oracle Turing machine\rq) is an oracle machine whose running time is bounded in the maximum size of its string input and all answers returned by its oracle input during its computation on these inputs.
By itself, this notion is too weak to provide a class of feasible functionals: it is well known that iterating a polynomial-time function may result in exponential growth and that this is possible within this class.
While Cook's approach was to rule out this behaviour on a semantic level, an alternate approach explored by a number of works involves the introduction of further restrictions to the POTM model \cite{SethRec,MR1727821,MR1826285}.
Most of these restrictions are fairly elaborate and in some sense rely on bounding by second-order polynomials, at least implicitly.

The present paper investigates less elaborate ways to restrict the behaviour of POTMs.
We present two simple syntactic restrictions to the POTM model that give proper subclasses of Mehlhorn's class and prove them to -- when closed in a natural way -- lead back to the familiar class of feasible functionals.

The first restriction, originally introduced by Kawamura and Steinberg, is called \demph{finite length revision}, while operators computable by a POTM with finite length revision are called \demph{strongly polynomial-time computable} \cite{KawamuraS17}.
It is known that this class excludes very simple examples of polynomial-time computable operators.
The second restriction is similar, original to this work and we dub it \demph{finite lookahead revision}.
We call operators that are computable by such POTMs \demph{moderately polynomial-time computable}.
The name is motivated by our results that this class includes the strongly polynomial-time computable operators (Proposition~\ref{prop: spt contains optfilr}) and is contained in the polynomial-time operators (Proposition~\ref{resu:from step-count and lookahead revision number to running time}).
These inclusions are proven to be strict (Example~\ref{ex: filr not in p}), but in contrast to the case with strong polynomial-time it requires some effort to find a functional that is polynomial-time but not moderately polynomial-time.
Along the way we prove that in our setting an additional restriction on the POTMs that Kawamura and Steinberg impose is not actually a restriction (Lemma~\ref{resu:step-counts}).

In both cases, the failure to capture feasibility is due to a lack of closure under composition.
The main result of this paper (Theorem~\ref{th: main}) is that each of these classes, when closed under lambda-abstraction and application, results in exactly the polynomial-time functionals.
To prove this we establish moderate polynomial-time computability of limited recursion (Lemma~\ref{resu:R is mpt}) and provide a factorization of any moderately polynomial-time computable operator into a composition of two strongly polynomial-time computable operators (Theorem~\ref{resu:factorization}).
The proof of the later turns out to have a nice interpretation:
the outer operator executes the original machine while throwing exeptions in certain cases and the inner operator is an exception handler whose form only depends on restricted information about the original operator.
Finally, we point out a case where composition does not lead to a loss of moderate polynomial-time computability (Lemma~\ref{resu:composition}).

The notion of a POTM is likely what a person familiar with complexity theory would probably first propose if asked what programs with subroutine calls should be considered efficient.
The inadequacy of this model is very easy to grasp:
even if the subroutine is polynomial-time, there is no guarantee that the combined program runs in polynomial-time.
The two conditions of finite length and lookahead revision are very straightforward attempts to solve this issue.
We prove that imposing either in addition to the POTM condition leads to feasible programs and that it remains possible to produce solutions to all feasible problems as long as one is willing to separate the task at hand into subtasks if necessary.
We provide some --  but far from complete -- insight into when such a split is necessary and how it can be done.

\subsection*{Preliminaries}
Let $\Sigma$ denote any finite alphabet, and $\S$ the set of finite strings over $\Sigma$.
Usually $\Sigma=\{\sdzero,\sdone\}$, occasionally we use another separator symbol $\#$.
The empty string is denoted $\e$, and arbitrary elements of $\S$ are denoted $\str a, \str b, \dots$.
If $\str a,\str b \in \S$, we write $\str a\str b$ to denote their concatentation, $|\str a|$ to denote the length of $\str a$, $\str a^{\le n}$ to denote the prefix of $\str a$  of length $n \ge 0$.
We write $\str b \subseteq \str a$ to indicate that $\str b$ is an initial segment of $\str a$ (i.e. for some $0 \le n \le |\str a|$, $\str b = \str a^{\le n}$).
For every $k \in \mathbb{N}$ and $1 \le i \le k$ we note that there exist polynomial-time functions $\langle{\cdot,  \dots, \cdot}\rangle\colon (\S)^k \rightarrow \S$ and $\pi_{i,k}\colon\S \rightarrow \S$ such that $\pi_{i,k}(\tup{\str a_1,\dots,\str a_k})=\str a_i$.
We  assume that  for every $k$ there are constants $c_1,c_2$ such that $|\tup{\str a_1,\dots,\str a_k}|\le c_1\cdot(|\str a_1|+\dots+|\str a_k|)+c_2$ and that increasing the size of any of the strings $\str a_i$ does not decrease the size of the tuple.
The tupling functions are lifted to also operate on functions $\varphi_1,\ldots,\varphi_n\colon \S \to \S$ via $\langle \varphi_1,\ldots,\varphi_n\rangle(\str a) :=\langle\varphi_1(\str a),\ldots,\varphi_n(\str a)\rangle$.
A type 0 functional is an element of $\S$, and for $t \in \NN$, a \demph{type $t+1$ funtional} is a mapping from functionals of type $\le t$ to $\S$.
This paper is mostly concerned with type $t$ functionals for $t \le 2$.

\subsection{Second-order complexity theory}\label{sec:second-order complexity theory}
In \cite{MR1374053}, Kapron and Cook introduce a computational model for type-two polynomial time functionals using oracle Turing machines.
We begin by reviewing their model.
For notational simplicity, we do this in the operator setting:
Denote by $\B:= \S \to \S$ the Baire space, that is the collection of all univariate type 1 functions.
The elements of $\B$ are denoted by $\varphi$, $\psi$, \ldots.
An \demph{operator} is a mapping $F: \B \rightarrow \B$.
Note that each operator $F$ can be assinged a functional $\mathcal F$ via $\mathcal F (\varphi, \str a) := F(\varphi)(\str a)$.
Conversely, any total type 2 functional can be translated to an operator by use of the pairing functions.
However, it should be kept in mind that in the absence of totality or pairing functions the notions of operators and functionals may diverge.

        An \demph{oracle Turing machine} (OTM) or for short \demph{oracle machine} is a Turing machine that has distinguished and distinct \demph{query} and \demph{answers} tapes and a designated \demph{oracle state}.
        The run of an oracle machine $M$ on oracle $\varphi \in \B$ and input $\str a$ proceeds as the run of a regular machine on input $\str a$, but whenever the oracle machine enters the oracle state, with $\str b$ written on the query tape, $\varphi(\str b)$ is placed immediately on the answer tape, and the read/write head returns to its initial position on both of these tapes.
        If the machine terminates we denote the result by $M^\varphi(\str a)$.
        Note that each oracle machine naturally computes a partial functional of type $\B \times \S \to \S$ and may also be considered to compute a partial operator via $F_M(\varphi) = \psi \iff \forall \str a, M^\varphi(\str a) = \psi(\str a)$.

        The number of steps $\timef_{M}(\varphi,\str a)$ an oracle machine $M$ takes given oracle $\varphi$ and input $\str a$ is counted as in a Turing machine with the following addition already implied above:
        entering the oracle state takes one time step, but there is no cost for receiving an answer from the oracle\footnote{In this paper, we follow the {\em unit-cost} model \cite{MR0411947}, as opposed to the {\em length-cost} model \cite{MR1374053}.
        Note that while an answer is received from the oracle in a single step, any further processing takes time dependant on its length.}.
        The running time of an oracle machine usually depends on the oracle.

        To be able to talk about bounds for this running time it is necessary to have a notion for the size of an oracle. The idea behind this definition, originally given in \cite{MR1374053}, is that the cost of a function input should itself be a function which on input $n \in \NN$ gives the maximum output size over all inputs of size at most $n$.
        \begin{definition}
            For a given $\varphi\in \B$ define its \demph{size function} $\length\varphi\colon \NN \to\NN$ by
            \[ \length\varphi(n) := \max_{\length{\str a} \leq n} \{\length{\varphi(\str a)}\}. \]
        \end{definition}
        This suggests the type $\NN^\NN \times \NN \to \NN$ as the right type for running times:
        If $T$ is a function of this type, we say that the running time of an oracle machine $M$ is bounded by $T$ if for all oracles $\varphi:\S \rightarrow \S$ and all strings $\str a$ it holds that
        \[ \timef_{M}(\varphi,\str a) \leq T(\length \varphi, \length{\str a}). \]
        The only thing left to do is to pick out the time bounds that should be considered polynomial.
        \begin{definition}
            The set of \demph{second-order polynomials} is the smallest subset of $\NN^\NN\times\NN\to\NN$ that contains the functions $(l,n)\mapsto 0$, $(l,n)\mapsto 1$, $(l,n)\mapsto n$, is closed under point-wise addition and multiplication and such that whenever $P$ is an element, then so is $(l,n)\mapsto l(P(l,n))$.
        \end{definition}
        Thus, by this definition each second-order polynomial can be represented by a term in a variable symbol $n$ for numbers, a symbol $l$ for a function, binary addition and multiplication.
        For instance the function defined by $P(l,n) := n \cdot l(n^2+1) \cdot l(n^5 + l(n +1)) + 1$ is a second-order polynomial.
We may now use Kapron and Cook's characterization \cite{MR1374053} as our definition of Mehlhorn's class:

\begin{definition}\label{def:BPT}
An operator $F\colon\B\to\B$ is \demph{polynomial-time compu\-table} if there is an oracle machine $M$ and a second-order polynomial $P$ such that for all oracles $\varphi$ and all strings $\str a$ it holds that
\[
    F(\varphi,\str a)=M^\varphi(\str a) \quad\text{and}\quad \timef_{M}(\varphi,\str a) \leq P(\length \varphi, \length{\str a})
\]
We use \p to denote the class of all polynomial-time operators.
\end{definition}
This notion gives rise to a notion of polynomial-time computable functionals of type two.
By abuse of notation we also refer to this class of functionals by $\p$.
The functional view becomes important in Section~\ref{sec:lambda}.
In the literature these notions are often referred to as \lq basic polynomial-time\rq.
As discussed in the introduction this is due to past uncertainties about the class being broad enough.
We believe that enough evidence has been gathered that the class is appropriate and drop the \lq basic\rq.

       Consider the following result taken from \cite{KawamuraS17} that implies the closure of polynomial-time computable operators under composition.
       \begin{theorem}\label{resu: closure under composition}
         Let $P$ and $Q$ be second-order polynomials that bound the running times of oracle machines $M$ and $N$.
         Then there exists an oracle machine $K$ that computes the composition of the operators corrsponding to $M$ and $N$, i.e. for all $\varphi$ and $\str a$ it holds that $K^{\varphi}(\str a) = N^{\lambda \str b. M^{\varphi}(\str b)}(\str a)$.
         Furthermore $K$ can be chosen such that forall $\varphi$ and $\str a$,
            \[ \timef_{K}(\varphi, \str a) \leq C (P(\length\varphi, Q(\lambda n.P(\length\varphi,n),\length{\str a})) \cdot Q(\lambda n. P(\length\varphi,n),\length{\str a}) + 1). \]
        \end{theorem}
        The proof is straightforward and the reader not familiar with the setting may sketch a proof to get a feeling for oracle machines, higher-order time bounds and second-order polynomials.
        Unsurprisingly, the machine $K$ in this proof is constructed by replacing the oracle query commands in the program of the machine computing the outer operator by copies of the program of the machine computing the inner operator and slightly adjusting the rest of the code.
        Note how this result lends itself to generalizations:
        Kawamura and Steinberg use it to lift closure under composition to a class of partial operators that they still refer to as polynomial-time computable.
        The proof also remains valid if the second-order polynomials $P$ and $Q$ are replaced by arbitrary functions $S$ and $T$ that are monotone in the sense that whenever $l$ is pointwise bigger than $k$ and both are non-decreasing then the same holds for the functions $\lambda n. T(l,n)$ and $\lambda n.T(k,n)$.

    Reasoning about second-order polynomials as bounding functions can at times be tricky.
    Their structure theory is significantly less well developed than that of regular polynomials.
    Indeed, it is not clear whether second-order polynomials allow a nice structure theory at all.
    Furthermore, the use of nonfinitary objects in running times raises the question of computational difficulty of evaluating such bounds.
    It is a very simple task to find the length of a string from a string.
    In contrast, evaluating the length $\length\varphi (n) = \max_{\length{\str a}\leq n}\length{\varphi(\str a)}$ of a string function is intuitively a hard task as it involves taking a maximum over an exponential number of inputs.
    The following theorem from \cite{MR1374053} makes this intuition formal.

    \begin{theorem}
        The length function is not polynomial-time computable:
        An operator $L$ that fulfills
        \[ \length{L(\varphi)(\str a)} = \length{\varphi}(\length{\str a}) \]
        cannot be polynomial-time computable.
    \end{theorem}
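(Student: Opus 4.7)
The plan is a standard adversary argument. Suppose toward a contradiction that $L$ is polynomial-time computable, witnessed by an oracle machine $M$ and a second-order polynomial $P$ with $M^\varphi(\str a)=L(\varphi)(\str a)$ and $\timef_M(\varphi,\str a)\le P(\length\varphi,\length{\str a})$. First I would specialize to the zero oracle $\varphi_0\colon \str b\mapsto\e$. Then $\length{\varphi_0}$ is identically $0$, so every subterm of $P$ of the form $l(\cdot)$ vanishes and $P(\length{\varphi_0},n)$ reduces to an ordinary polynomial $q(n)$. Hence on an input $\str a$ of length $n$, the computation $M^{\varphi_0}(\str a)$ makes at most $q(n)$ queries and writes an output of length at most $q(n)$.

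Next comes a counting step. Let $Q_n$ be the set of strings actually queried during this run, so $|Q_n|\le q(n)$, whereas there are $2^{n+1}-1$ strings of length at most $n$. For $n$ sufficiently large we may pick some $\str b$ with $\length{\str b}\le n$ and $\str b\notin Q_n$. Define a perturbed oracle $\varphi_1$ that coincides with $\varphi_0$ everywhere except at $\str b$, where we set $\varphi_1(\str b)=\sdone^K$ for a large $K$ to be chosen. By induction on the step count, the runs of $M$ on $(\varphi_0,\str a)$ and on $(\varphi_1,\str a)$ pass through identical configurations, since every query actually issued receives the same answer in both runs. Therefore $M^{\varphi_1}(\str a)=M^{\varphi_0}(\str a)$, and in particular its length is at most $q(n)$. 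On the other hand $\length{\varphi_1}(n)\ge K$ because $\length{\str b}\le n$, so the defining equation for $L$ forces $\length{M^{\varphi_1}(\str a)}\ge K$. Choosing $K>q(n)$ yields the contradiction.

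The only subtle point is justifying the identical-trajectories step despite the fact that the nominal time bound $P(\length{\varphi_1},n)$ for $M$ on $\varphi_1$ can be much larger than $q(n)$, because $\length{\varphi_1}$ is now huge. What matters is only that $M^{\varphi_0}(\str a)$ halts within $q(n)$ steps and that every one of those steps consults the oracle at a string where $\varphi_0$ and $\varphi_1$ agree; no comparison of the two time bounds is required. Once this is observed, the proof reduces to the counting inequality $q(n)<2^{n+1}-1$ for large $n$ together with the choice $K>q(n)$.
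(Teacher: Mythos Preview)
Your argument is correct. The adversary construction is the standard one: run $M$ on the zero oracle to learn the finite set of queries issued within the polynomial budget $q(n)=P(\length{\varphi_0},n)$, hide a long value at an unqueried point of length at most $n$, and use determinism to conclude that the two runs coincide. Your remark that the comparison is driven entirely by the terminated run on $\varphi_0$, not by the (possibly huge) nominal bound $P(\length{\varphi_1},n)$, is exactly the point that needs to be made. One tiny sharpening: since $\length{L(\varphi_0)(\str a)}=\length{\varphi_0}(n)=0$, the output on $\varphi_0$ is in fact $\e$, so the final inequality could be stated as $K>0$ rather than $K>q(n)$; your version is of course still correct.

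As for the comparison you asked about: the paper does not actually prove this theorem. It is quoted from \cite{MR1374053} and stated without proof, so there is no in-paper argument to set yours against. Your write-up is precisely the kind of proof one would expect to find behind that citation.
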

    As a consequence, a running time bound of an oracle machine is not very useful for estimating the time of a run on a given oracle and input.
    Even if the running time is a second-order polynomial $P$, to get the value $P(\length\varphi,\length{\str a})$ one has to evaluate the length function several times.

    Of course, in this setting the task is a little silly.
    It is possible to evaluate the machine and just count the number of steps it takes.
    This results in a tighter bound that can be computed from the oracle and the input in polynomial time.
    However, from a point of view of clockability, the problem is relevant:
    given a second-order polynomial $P$ that is interpreted as a \lq budget\rq\ and an oracle machine $M$ that need not run in polynomial time it is in general impossible to specify another machine $N$ that runs in polynomial-time and such that for all oracles and inputs
    \[ \timef_{M}(\varphi,\str a)\leq P(\length{\varphi},\length{\str a}) \quad\Rightarrow\quad N^\varphi(\str a) = M^\varphi(\str a). \]
    That is: $N$ returns the correct value in the case that the run of $M$ is in budget \cite{KawamuraS17}.

\subsection{Oracle polynomial-time}

        The following notion was originally introduced by Cook \cite{MR1236005} and has been investigated by several other authors as well \cite{SethRec,MR1238294,MR1727821}.
        Recall that for an oracle machine $M$ the number of steps this machine takes on input $\str a$ with oracle $\varphi$ was denoted by $\timef_{M}(\varphi,\str a)$ and that for counting the steps, the convention to count an oracle query as one step was chosen.
        \begin{definition}\label{def:opt}
            Let $M$ be an oracle machine.
            For any oracle $\varphi$ and input $\str a$ denote by $m_{\varphi,\str a}$ the maximum of the lengths of the input and any of the oracle answers that the machine gets in the run on input $\str a$ with oracle $\varphi$.
            The machine $M$ is said to run in \demph{oracle polynomial-time} if there is a polynomial $p$ such that for all oracles $\varphi$ and inputs $\str a$
            \[ \tag{sc}\label{eq:step-count} \timef_{M}(\varphi,\str a) \leq p(m_{\varphi,\str a}). \]
            Let $\opt$ denote the class of operators that are computed by a machine that runs in oracle polynomial-time.
        \end{definition}
        To avoid confusion with different notions of running times, we call a function $t\colon\NN\to\NN$ such that $\timef_M(\varphi,\str a) \leq t (m_{\varphi,\str a})$ for all $\varphi$ and $\str a$  a \demph{step-count} of $M$.
        That is: a step-count fulfills the condition from \eqref{eq:step-count} but need not be a polynomial and an oracle machine runs in oracle polynomial-time if and only if it has a polynomial step-count.
        An oracle machine with a polynomial step-count may be referred to as a POTM.

        The nature of the restrictions imposed on POTMs significantly differs from imposing a second-order time bound as is done in Definition~\ref{def:BPT}.
        Instead of using higher-order running times, the same type of function that are used as running times for regular Turing machines is used.
        The dependence on the oracle is accounted for by modifying the input of the bounding function.
        This appears to be a relaxation of bounding by second-order polynomials.
        The following result of \cite{MR1236005} shows that this is indeed the case.

        \begin{theorem}[$\p\subseteq\opt$]\label{resu:polytime implies opt}
            Any oracle machine that runs in time bounded by a second-order polynomial has a polynomial step-count.
        \end{theorem}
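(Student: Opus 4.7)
The plan is to convert the second-order polynomial bound $P(|\varphi|,|\str a|)$ into an ordinary polynomial in $m_{\varphi,\str a}$ by replacing the length function $|\varphi|$ with a constant function. The key observation is that, during any single run, only the values of $\varphi$ on strings actually queried matter, and all such answers have length at most $m_{\varphi,\str a}$.

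First I would fix an oracle $\varphi$ and an input $\str a$, write $m := m_{\varphi,\str a}$, and define a modified oracle $\psi\in\B$ that agrees with $\varphi$ on each string queried by $M$ during the run on $(\varphi,\str a)$ and returns $\e$ on every other string. A straightforward induction on the number of executed steps shows that $M$ on $(\psi,\str a)$ produces the same sequence of configurations as $M$ on $(\varphi,\str a)$, since the two oracles agree on every query actually made; in particular $\timef_M(\psi,\str a)=\timef_M(\varphi,\str a)$. Moreover every non-$\e$ value of $\psi$ is an oracle answer from the original run, so $|\psi|(k)\le m$ for every $k\in\NN$.

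Next I would prove by structural induction on $P$ that a second-order polynomial is monotone in its first argument (restricted to non-decreasing length-type functions) and non-decreasing in its second argument; the only non-routine clause is $P(l,n)=l(Q(l,n))$, which uses the non-decreasing property of $l$ to push the inductive bound through $Q$. Letting $c_m$ denote the constant function with value $m$, which is non-decreasing and pointwise dominates $|\psi|$, and using $|\str a|\le m$, I obtain
$$\timef_M(\varphi,\str a)=\timef_M(\psi,\str a)\le P(|\psi|,|\str a|)\le P(c_m,m).$$

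Finally I would verify, again by structural induction on $P$, that $P(c_m,n)$ is just an ordinary polynomial in $m$ and $n$: the only interesting clause is $l(Q(l,n))$, which evaluates to $c_m(Q(c_m,n))=m$ regardless of $Q$, while addition, multiplication, and the base cases pass through trivially. Setting $p(m):=P(c_m,m)$ then yields the desired polynomial step-count. The main obstacle, though conceptually mild, is the construction of $\psi$: it must be defined specifically relative to the run of $M$ on $\varphi$, and one must be a bit careful in checking that the computations on $\psi$ and $\varphi$ really do coincide step for step, since $\psi$ is defined in terms of those very steps. Once that point is made, the rest of the argument reduces to two parallel routine inductions over the structure of second-order polynomials.
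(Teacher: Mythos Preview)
Your proposal is correct and follows essentially the same argument as the paper's proof: both replace $\varphi$ by an oracle $\psi$ that agrees on the queries actually asked and is $\e$ elsewhere, bound $|\psi|$ pointwise by the constant function with value $m_{\varphi,\str a}$, and set $p(n):=P(l_n,n)$ (your $P(c_m,m)$). The only difference is that you spell out the two routine structural inductions (monotonicity of second-order polynomials and polynomiality of $n\mapsto P(l_n,n)$) that the paper leaves implicit.
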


        \begin{proof}
            Let $M$ be an oracle machine that runs in time bounded by a second-order polynomial $P$.
            For $n\in\NN$ let $l_n:\NN\to\NN$ be the constant function with value $n$.
            Claim that the polynomial given by
            \[ p(n):= P(l_n,n) \]
            is a step-count.
            To verify this let $\varphi$ be an arbitrary oracle and $\str a$ an arbitrary input.
            Replace the oracle $\varphi$ with the oracle $\psi$ that returns the same values as $\varphi$ on all of the oracle queries that are asked in the computation of $M^\varphi(\str a)$ and returns the empty string on all other arguments.
            Since the machine can only interact with the oracles through the queries, the computations $M^\varphi(\str a)$ and $M^\psi(\str a)$ coincide.
            In particular the time these computations take are the same.
            By definition of $\psi$, $l_n$ and $m_{\varphi,\str a}$ it holds that $\length{\psi}\leq l_{m_{\varphi,\str a}}$.
            It follows from  $P$ being a running time bound of $M$ and the monotonicity of second-order polynomials that
            \[
                \timef_{M}(\varphi,\str a)=\timef_{M}(\psi,\str a)\leq P(\length{\psi},\str a) \leq P(l_{m_{\varphi,\str a}},m_{\varphi,\str a}) = p(m_{\varphi,\str a}).
            \]
            Since $\varphi$ and $\str a$ were arbitrary, it follows that $p$ is a polynomial step-count of $M$.
        \end{proof}

        It is well known that \p forms a proper subclass of \opt.
        There exist operators in \opt that do repeated squaring and thus do not preserve polynomial-time computability.
        \begin{example}[$\p\subsetneq \opt$]\label{ex:repeated squaring}
            The operator
            \[ F(\varphi)(\str a):= \varphi^{\length{\str a}}(\sdzero) \]
            can be computed by a machine that runs in oracle polynomial-time but does not preserve polynomial-time computability as it maps the polynomial-time computable function $\varphi(\str a) :=\str a\str a$ to a function whose return values grow exponentially.
        \end{example}

\section[Recovering feasibility from OPT]{Recovering feasibility from \opt}
The failure of \opt to preserve polynomial-time computability indicates that it is unsuitable as a class that captures an acceptable notion of feasibility.
We may ask whether there is a natural way to restrict the POTM model to recover feasibility.
One way to do this is to introduce preservation of polynomial-time functions as an {\em extrinsic} or a {\em semantic} restriction \cite{MR1251285}.
This is the approach taken by Cook with his notion of intuitively feasible functionals \cite{MR1236005}.
Since the formulation of Cook's restrictions is most comfortably done using lambda calculus we postpone restating them to Section~\ref{sec:lambda}.
Here, we consider {\em intrinsic} or {\em syntactic} restrictions of the POTM model instead.
While in part this is motivated by some of the drawbacks of the extrinsic approach, we believe that the syntactic approach stands on its own merit.
In particular, if the syntactic condition is simple enough and checkable with minimal overhead, it provides simpler analysis techniques for showing that a particular operator is feasible.

Motivated by the difficulty encountered with repeated squaring in Example~\ref{ex:repeated squaring}, we consider POTMs with restrictions on the oracle access that disallow this behaviour.
Similar restrictions have been considered by Seth \cite{SethRec,MR1238294}.
Seth's class $C_0$ consists of operators computable by POTMs whose number of queries to the oracle is uniformly bounded by a constant, while his class $C_1$ is defined using a form of dynamic bound on the size of any query made to the oracle.
It should be noted, that Seth proves his class $C_1$ to coincide with Mehlhorn's class $\p$ and that it employs POTMs that are clocked with something that faintly resembles second-order polynomials.
The class $C_0$ is too restrictive to allow the recovery of all polynomial-time operators.
The class $C_1$ is aimed at finding a bigger class of operators that should still be considered feasible and thus overly complicated for our purposes.
For the similar reasons we do also not go into detail about restrictions considered by Pezzoli \cite{MR1727821}.

We seek conditions that do not beg the question of whether $\p$ is a maximal class of feasible functionals, disallow unbounded iteration (as in the example of repeated squaring,) and yet are less restrictive than those of $C_0$.
\subsection{Strong polynomial-time computability}

The first restriction on \opt that we consider was introduced by Kawamura and Steinberg in \cite{KawamuraS17}:
        \begin{definition}
            An oracle machine is said to run with \demph{finite length revision} if there exists a constant $r$ such that in the run of the machine on any oracle and any input the number of times it happens that an oracle answer is bigger than the input and all of the previous oracle answers is at most $r$.
        \end{definition}

        It should be noted that Kawamura and Steinberg use a slightly different notion of a step-count.
        They say that a function $t\colon\NN\to\NN$ is a \demph{step-count} of an oracle machine $M$ if for all oracles and inputs it holds that
        \[ \forall k \in \NN\colon k\leq \timef_{M}(\varphi,\str a) \Rightarrow k\leq t(m_{k,\varphi,\str a}), \]
        where $m_{k,\varphi,\str a}$ is the maximum of $\length{\str a}$ and the biggest oracle answer given in the first $k$ steps of the computation of $M$ with oracle $\varphi$ and input $\str a$.
        For the function $t$ to be a step-count in the sense of the present paper it suffices to satisfy the condition for the special choice $k:= \timef_{M}(\varphi,\str a)$.
        The reason we use the same name for both of these notions is that they are equivalent in our setting.
        The result is interesting in its own right as the advantage of Kawamura and Steinberg's notion is that it can be checked on the fly whether a given polynomial is a step-count of an oracle machine without risking to spend a huge amount of time if this is not the case.
        We only state the equivalence for the case we are really interested in, but the proof generalizes.

        \begin{lemma}\label{resu:step-counts}
            Every oracle machine that computes an operator from \opt has a polynomial step-count (in the sense of Kawamura and Steinberg).
        \end{lemma}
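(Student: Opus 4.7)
The plan is to show that the very same polynomial $p$ witnessing Cook's condition is also a Kawamura--Steinberg step-count, so no polynomial adjustment is needed. Fix an oracle machine $M$ with polynomial step-count $p$ in the sense of Definition~\ref{def:opt}, and let $\varphi$, $\str a$ and $k\le\timef_{M}(\varphi,\str a)$ be arbitrary. I would employ a standard oracle-truncation argument, mirroring the proof of Theorem~\ref{resu:polytime implies opt}: define an auxiliary oracle $\psi$ that responds to every query posed during the first $k$ steps of $M^{\varphi}(\str a)$ with the same answer as $\varphi$, and that returns $\e$ on every other input.

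The first thing to verify is that $M^{\psi}(\str a)$ and $M^{\varphi}(\str a)$ agree on their first $k$ steps: this is immediate because the machine's configuration after each step is determined by the input and by the sequence of oracle answers received so far, and by construction these match. In particular $\timef_{M}(\psi,\str a)\ge k$. Next I would bound $m_{\psi,\str a}$ by $m_{k,\varphi,\str a}$: every oracle answer received during the \emph{entire} run of $M^{\psi}(\str a)$ is either $\e$ or was already received in the first $k$ steps of the $\varphi$-run, so its length is at most the maximum answer length encountered in those first $k$ steps. Combining this with $\length{\str a}$ yields $m_{\psi,\str a}\le m_{k,\varphi,\str a}$, and then Cook's hypothesis gives
\[
    k \;\le\; \timef_{M}(\psi,\str a) \;\le\; p(m_{\psi,\str a}) \;\le\; p(m_{k,\varphi,\str a}),
\]
which is exactly the Kawamura--Steinberg condition for $p$. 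Since $k$, $\varphi$ and $\str a$ were arbitrary, this finishes the proof.

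The only delicate point -- and arguably the main obstacle -- is to notice that the $\psi$-run may well last longer than $k$ steps, and may even outlast the $\varphi$-run, so one cannot simply copy Cook's bound over from the original computation. What rescues the argument is the deliberate choice of $\e$ as the answer on the \emph{complement} of the first $k$ queries: this guarantees that the continuation of $M^{\psi}$ past step $k$ cannot inflate the size of any oracle answer beyond what was already observed during the first $k$ steps. Once this is in place, everything else is routine bookkeeping.
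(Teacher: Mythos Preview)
Your proof is correct and is essentially the same argument as the paper's: both replace $\varphi$ by the oracle $\psi$ that agrees with $\varphi$ on the queries asked in the first $k$ steps and returns $\e$ elsewhere, then use determinism to match the first $k$ steps and the choice of $\e$ to keep $m_{\psi,\str a}$ from exceeding $m_{k,\varphi,\str a}$. The only cosmetic difference is that the paper phrases it as a contradiction, whereas you give the direct chain $k\le\timef_M(\psi,\str a)\le p(m_{\psi,\str a})\le p(m_{k,\varphi,\str a})$; your last inequality tacitly uses monotonicity of $p$, but in fact $m_{\psi,\str a}=m_{k,\varphi,\str a}$ holds (the first $k$ steps already realise the maximum), so this is harmless.
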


        \begin{proof}
            Let $M$ be a machine that computes an element of \opt.
            Then there exists a polynomial $p$ such that
            \[ \forall \varphi,\str a: \timef_{M}(\varphi,\str a) \leq p(m_{\varphi,\str a}). \]
            We claim that this polynomial is already a step-count in the sense of Kawamura and Steinberg.
            Let $m_{k, \varphi, \str a}$ be the maximal value that is returned by the oracle in the computation of $M$ on input $\str a$, oracle $\varphi$ and before the machine takes the $k$-th step.
            Towards a contradiction, assume that $p$ was not a step-count.
            Then there exists some time $k$, a string $\str a$ and an oracle $\varphi$ such that $p(m_{k, \varphi,\str a}) > k$.
            Replace the oracle $\varphi$ by the oracle $\psi$ that returns the exact same answers as $\varphi$ on the strings that are asked in this run before the $k$-th step and returns $\epsilon$ on all other strings.
            Note that due to the machine being deterministic the computation of $M$ on oracle $\varphi$ and $\psi$ are identical up to the $k$-th time step.
            Furthermore, due to the definition of $\psi$, the number $m_{k,\varphi,\str a}$ coincides with the biggest size of any oracle answer $\psi$ gives.
            Thus, it follows that $m_{\psi,\str a} = m_{k,\varphi,\str a}$ and therefore
            \[ \timef_{M}(\psi,\str a) \geq k > p(m_{k,\varphi,\str a}) = p(m_{\psi,\str a}). \]
            This contradicts the assumption about $p$.
        \end{proof}

        The key idea of the above proof is that the oracle can be modified arbitrarily.
        In a setting where not all oracles are eligible this might not be possible anymore.
        In this case it is advisable to work with step-counts in the sense of Kawamura and Steinberg.
        Since this paper only considers total operators, i.e.\ no restrictions are imposed on the oracles, it is irrelevant which notion is used.
        In particular we may formulate strong polynomial-time computability as introduced in \cite{KawamuraS17}.
        \begin{definition}
            An operator is \demph{strongly polynomial-time computable} if it can be computed by an oracle machine that has both finite length revision and a polynomial step-count.
            The class of these operators is denoted by $\spt$.
        \end{definition}

        As the name suggests, strong polynomial-time compuatbility implies polynomial-time computability.
        We state this as it can be deduced from the results of this paper.
        A direct proof is given in \cite{KawamuraS17}.

        \begin{proposition}[$\spt \subseteq \p$]\label{resu:from step-count and length revision number to running time}
           The running time of a machine that has a polynomial step-count and finite length revision can be bounded by a second-order polynomial.
        \end{proposition}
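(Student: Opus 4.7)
The plan is to track how the maximum oracle answer size can grow over the course of a computation, using the length-revision bound to argue that this happens at most $r$ times and using the step-count to bound how big each new revision can be. Let $M$ be a machine computing an operator in $\spt$ with polynomial step-count $p$ and length-revision bound $r$. By Lemma~\ref{resu:step-counts}, $p$ is also a step-count in the Kawamura--Steinberg sense, so for every run $M^\varphi(\str a)$ and every step index $k$ reached we have $k \leq p(m_{k,\varphi,\str a})$, where $m_{k,\varphi,\str a}$ is the maximum of $\length{\str a}$ and the sizes of all oracle answers received in the first $k$ steps.

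Next I would enumerate the length revisions in the order they occur during the run and set $m_0 := \length{\str a}$, letting $m_i$ denote the value of $m_{k,\varphi,\str a}$ just after the $i$-th revision (for $i \geq 1$, this is the length of the $i$-th revising answer). Between revisions $i$ and $i+1$ the quantity $m_{k,\varphi,\str a}$ is stuck at $m_i$, so the Kawamura--Steinberg step-count forces $k \leq p(m_i)$ throughout this interval. In particular, the query that triggers revision $i+1$ is written during this interval and therefore has length at most $p(m_i)$, so the answer it provokes has length at most $\length\varphi(p(m_i))$. This gives the recursion
\[
    m_{i+1} \leq \length\varphi(p(m_i)).
\]

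I would then define second-order polynomials inductively by $P_0(l,n) := n$ and $P_{i+1}(l,n) := l(p(P_i(l,n)))$; this is a valid definition because $p$ is a polynomial (hence a second-order polynomial in $n$ alone) and second-order polynomials are closed under pointwise sum, product, and application of $l$ to a second-order polynomial. A straightforward induction using monotonicity of $p$ and of $l \mapsto l(\cdot)$ shows $m_i \leq P_i(\length\varphi, \length{\str a})$ for every revision reached, and since the total number of revisions is at most $r$, the final value of $m_{\varphi,\str a}$ satisfies $m_{\varphi,\str a} \leq P_r(\length\varphi, \length{\str a})$. Applying the step-count one last time yields
\[
    \timef_M(\varphi,\str a) \leq p(m_{\varphi,\str a}) \leq p(P_r(\length\varphi,\length{\str a})),
\]
and the right-hand side is the desired second-order polynomial.

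The main obstacle, and the reason Lemma~\ref{resu:step-counts} has to be invoked, is the control one needs on the size of the query that triggers revision $i+1$: without the Kawamura--Steinberg formulation of step-counts the only available bound on $\timef_M$ uses the \emph{final} maximum $m_{\varphi,\str a}$, which is circular when one is trying to bound $m_{i+1}$ in the first place. Beyond that, the argument is largely bookkeeping, provided one is careful to appeal to monotonicity of second-order polynomials when pushing the inductive bound through $l \mapsto l(\cdot)$.
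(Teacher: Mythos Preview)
Your direct argument is essentially correct and is in fact very close to the proof the paper gives for the companion result $\mpt\subseteq\p$ (Proposition~\ref{resu:from step-count and lookahead revision number to running time}). The paper, however, does \emph{not} prove $\spt\subseteq\p$ this way: its proof of the present proposition is indirect, simply citing the chain $\spt\subseteq\mpt\subseteq\p$ established later (Propositions~\ref{prop: spt contains optfilr} and~\ref{resu:from step-count and lookahead revision number to running time}). The paper remarks that a direct proof appears in~\cite{KawamuraS17}, and what you have written is essentially that direct proof. Your route has the advantage of being self-contained and of making the second-order polynomial explicit; the paper's route has the advantage of avoiding duplication once $\mpt$ is in hand.

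One genuine (if easily repaired) slip: from ``at most $r$ revisions'' you conclude $m_{\varphi,\str a}\leq P_r(\length\varphi,\length{\str a})$, but your $P_i$ need not be monotone in $i$. If, say, $\varphi$ is constantly $\e$ then $\length\varphi\equiv 0$, so $P_i(\length\varphi,\length{\str a})=0$ for all $i\geq 1$ while $P_0(\length\varphi,\length{\str a})=\length{\str a}$; no revision occurs, $m_{\varphi,\str a}=\length{\str a}$, and your bound $p(P_r)=p(0)$ can fail. What your induction actually gives is $m_{\varphi,\str a}=m_s\leq P_s(\length\varphi,\length{\str a})$ where $s\leq r$ is the actual number of revisions. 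The fix is to take the second-order polynomial $\sum_{i=0}^r p\bigl(P_i(l,n)\bigr)$ (or equivalently $p\bigl(\sum_{i=0}^r P_i(l,n)\bigr)$) as the running-time bound; this dominates $p(P_s)$ for every $s\leq r$. The paper's proof of Proposition~\ref{resu:from step-count and lookahead revision number to running time} handles the analogous issue by carrying a $\max$ with $p(n)$ through the induction and then replacing the $\max$ by a sum at the end.
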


        \begin{proof}
            This is an immediate consequence of Proposition~\ref{prop: spt contains optfilr} that proves the inclusion of $\spt$ in a broader class called $\mpt$ that is introduced in the next section and proven to be included in $\p$ in Proposition~\ref{resu:from step-count and lookahead revision number to running time}.
        \end{proof}

        A merit of strong polynomial-time computability is that it has a direct interpretation as additional information about the running time of the machine:
        Knowing a polynomial step-count of a program and the number $r$ of length revisions, one can modify the program to provide real-time information about how long it estimates it will run.
        It can provide an estimate of the remaining computation time under the assumption that all necessary information has already been obtained from the oracle.
        In case new information is gained via oracle interaction it may update this estimate, but it may only do so at most $r$ times.

        A drawback of strong polynomial-time computability is that it severely restricts the access a machine has to certain oracles.
        It may always run into an increasing sequence of answers early in the computation.
        Once it runs out of length revisions, it can not pose any further oracle queries.
        There is no way to design a machine with finite length revision that does not simply abort when the revision budget is exceeded.
        This is reflected in the following example that shows that there are operators from $\p$ that are not in $\spt$, and was first considered in \cite{MR0411947}.

        \begin{example}[$\spt \subsetneq \p$]\label{ex:not spt}
            The operator
            \[ F(\varphi)(\str a):= \sdone^{\max_{\str b\subseteq \str a}\length{\varphi(\str b)}} \]
            is polynomial-time computable but not strongly polynomial-time.
            Any machine computing $F$ must query $\varphi$ at every $\str b \subseteq \str a$.
            Regardless of the order in which the machine decides to ask the queries, there is always an oracle whose answers are increasing in size.
            However, $F \in \p$, as it may be computed by examining $|\str a|$ queries, each of which is of size at most $|\varphi|(|\str a|)$.
        \end{example}
        The idea behind the counterexample is that it is possible to construct an oracle that forces an arbitrary number of length revisions for a fixed machine and polynomial step-count.
        More details for very similar examples can be found in \cite{KawamuraS17} or \Cref{ex:repeated squaring revisited} and the same method is also used in Example~\ref{ex: filr not in p}.
\subsection{Finite lookahead revision}
       The notion of strong polynomial-time computability rests on controlling the size of answers provided by calls to the oracle.
       While this restriction achieves the goal of disallowing anything but finite depth iteration, Example~\ref{ex:not spt} shows that it also excludes rather simple polynomial-time computable operators.
       This suggests an alternate form of control, namely controlling the size of the queries themselves instead of the answers.
        \begin{definition}
            An oracle machine is said to run with \demph{finite lookahead revision} if there exists a natural number $r$, such that for all possible oracles and inputs it happens at most $r$ times that a query is posed whose size exceeds the size of all previous queries.
        \end{definition}
        We are mostly interested in operators that can be computed by a machine that both has finite lookahead revision and a polynomial step-count.
        In keeping with the terminology of strong polynomial time, we shall call the class of operators so computable \demph{moderate polynomial time}, denoted \mpt.
        \begin{example}[$\mpt \not\subseteq \spt$]
          Consider the operator $F$ that maximizes the size of the return value of the oracle over the initial segments of the string input.
          This operator was used to separate $\spt$ from $\p$ in Example~\ref{ex:not spt} and therefore fails to be strongly polynomial-time computable.
          The operator $F$ belongs to $\mpt$:
          A machine computing $F$ on inputs $\varphi,\str a$ may just query the initial segments of $\str a$ in decreasing order of length to obtain the maximum answer.
        \end{example}

        While the definition above seems reasonable enough, it entails that machines sometimes need to unnecessarily pose oracle queries:
        to ask all interesting queries up to a certain size it may be necessary to pose a big query whose answer is of no interest to the computation just to avoid lookahead revisions during the computation.
        It is possible to tweak the oracle access of machines to avoid this behaviour and still capture the class of operators that are computed with finite lookahead revision.
        For instance one may use a data structure that may be viewed as finite stack of unbounded oracle tapes. The structure is initialized with some constant number of tapes, and only popping is allowed. Each pop requires the machine to specify in unary a number of cells to which the tape that is being popped is truncated.
        While this model is not the most straightforward one, it is appealing since similar restrictions on the oracle access have to be imposed to reason about space bounded computation in the presence of oracles \cite{MR973445,MR3259646}.

        \begin{proposition}[$\mpt\subseteq \p$]\label{resu:from step-count and lookahead revision number to running time}
          The running time of a machine that has a polynomial step-count and finite lookahead revision can be bounded by a second-order polynomial.
        \end{proposition}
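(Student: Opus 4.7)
The plan is to bound the running time by induction on the number of lookahead revisions. Fix an oracle $\varphi$ and input $\str a$ of size $n := \length{\str a}$, let $M$ be the machine, $p$ its polynomial step-count (taken to be monotone without loss of generality), and $r$ its lookahead revision bound. Let $L_0 := 0$, and for $1 \leq i \leq r$ let $L_i$ be the size of the query that triggers the $i$-th lookahead revision in the run of $M^\varphi(\str a)$; if the run contains fewer than $r$ revisions we copy the last value forward. By definition of lookahead revision, $L_i$ is also an upper bound on the size of every query asked up to (and including) the $i$-th revision, so $L_r$ bounds every query made in the entire computation.

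The inductive step relies on the truncation trick already used in the proof of Theorem~\ref{resu:polytime implies opt} and Lemma~\ref{resu:step-counts}. Let $T_{i+1}$ be the step at which $M^\varphi(\str a)$ triggers its $(i+1)$-th revision; since the query tape must have $L_{i+1}$ symbols written on it at that moment, $L_{i+1} \leq T_{i+1}$. Define the oracle $\psi_i$ that agrees with $\varphi$ on all queries asked strictly before step $T_{i+1}$ and returns $\e$ on every other string. By determinism, $M^{\psi_i}(\str a)$ matches $M^\varphi(\str a)$ during the first $T_{i+1}$ steps, so $T_{i+1} \leq \timef_M(\psi_i, \str a)$. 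In that common prefix every query has size at most $L_i$, hence every $\varphi$-answer delivered has size at most $\length\varphi(L_i)$; every other answer under $\psi_i$ equals $\e$. Thus $m_{\psi_i,\str a} \leq \max(n, \length\varphi(L_i))$ and the step-count, together with monotonicity of $p$, gives
\[
    L_{i+1} \leq T_{i+1} \leq \timef_M(\psi_i,\str a) \leq p\bigl(n + \length\varphi(L_i)\bigr).
\]

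Unrolling this inequality $r$ times starting from $L_0 = 0$ expresses $L_r$ as a nested term built from $p$, $\length\varphi$, $n$ and addition, which by definition is a second-order polynomial $Q$ in $\length\varphi$ and $n$. Since every answer received during the full run has size at most $\length\varphi(L_r) \leq \length\varphi(Q(\length\varphi,n))$, we have $m_{\varphi,\str a} \leq \max(n, \length\varphi(Q(\length\varphi,n)))$, and a final application of the step-count yields
\[
    \timef_M(\varphi,\str a) \leq p\bigl(n + \length\varphi(Q(\length\varphi,n))\bigr),
\]
which is again a second-order polynomial in $\length\varphi$ and $n$. The main obstacle is not conceptual but rather one of bookkeeping — fixing the right convention for $L_0$, handling runs with fewer than $r$ revisions, and justifying $L_{i+1} \leq T_{i+1}$ at the machine level — all straightforward extensions of arguments already present in the excerpt.
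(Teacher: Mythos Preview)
Your proof is correct and follows essentially the same approach as the paper's: both argue by induction on the number of lookahead revisions, bounding the size of the query that triggers each revision by the running time up to that point, and then invoking the step-count together with $\length\varphi$ to close the recursion. The only cosmetic difference is that the paper truncates the \emph{machine} (letting $\tilde M$ abort just before the last revision and applying the induction hypothesis to it), whereas you truncate the \emph{oracle} (the $\psi_i$ that returns $\e$ on queries not yet asked), exactly as in Theorem~\ref{resu:polytime implies opt} and Lemma~\ref{resu:step-counts}; the resulting second-order polynomial bounds differ in shape but are equivalent for the purpose at hand.
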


        \begin{proof}
            Let $M$ be a machine with polynomial step-count $p$ that never does more than $r$ lookahead revisions.
            Since increasing a function point-wise preserves being a step-count, we may assume $p(n) \geq n$.
            To see that the running time is bounded by the second-order polynomial
            \[ P(l,n) := (p\circ l)^r(p(n)) + p(n) \]
            prove the stronger statement that the running time of such a machine is bounded by the non-polynomial function $T(l,n):=\max\{(p\circ l)^r(p(n)),p(n)\}$ by induction over the lookahead revision number $r$.
            If $r$ is zero, then the machine does not ask any oracle queries and has to terminate within $p(\length{\str a})$ steps.
            If the assertion has been proven for $r$ and $M$ is a machine with lookahead revision number $r+1$, then consider the machine $\tilde M$ that carries out the same operations as $M$ does but aborts in cases where $M$ attempts to issue the oracle query that leads to the $(r+1)$-th lookahead revision.
            The machine $\tilde M$ has lookahead revision number $r$ and $p$ as a step-count.
            Thus, by the induction hypothesis, the oracle query that leads $\tilde M$ to abort and that triggers the last lookahead revision in $M$ can at most have been of length $\max\{(p\circ \length{\varphi})^r(p(\length{\str a})),p(\length{\str a})\}$.
            By definition of the length function the answer of the oracle has at most length $\length{\varphi}$ applied to that value.
            Since all later oracle queries of $M$ have to be of smaller length and since $p$ is a step-count of the machine, the time that $M$ takes can be bounded by the step-count applied to the maximum of that value and the length of the input.
            Recall that we assumed $p(n)\geq n$ and therefore $p(\length{\varphi}(p(n)))\geq \length{\varphi}(n)$.
            Thus,
            \[
                p\Big(\max\Big\{\length{\varphi}\Big((p\circ\length\varphi)^r\big(p(\length{\str a})\big)\Big),\length\varphi\big(p(\length{\str a})\big),\length{\str a}\Big\}\Big) =\max\big\{(p\circ\length{\varphi})^{r+1}\big(p(\length{\str a})\big),p(\length{\str a})\big\}.
            \]
            Which proves the assertion, and therefore also the proposition.
        \end{proof}

        \begin{proposition}[$\spt\subseteq \mpt$]\label{prop: spt contains optfilr}
            Every strongly polynomial-time computable operator is moderately polynomial-time computable.
        \end{proposition}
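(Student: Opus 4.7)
The plan is to take an SPT machine $M$ with polynomial step-count $p$ and length revision number $r$, and construct an MPT machine $\tilde M$ that computes the same operator. The naive attempt---using $M$ itself---fails in general: between successive length revisions $M$ is still allowed to issue queries of steadily growing size up to its current step-count budget, and so the number of lookahead revisions of $M$ need not be uniformly bounded by a constant. The key idea is therefore to pad $\tilde M$'s oracle interaction with a small number of oversized \emph{dummy queries} that pre-inflate the maximum query size for each ``phase'' between length revisions of $M$, so that none of $M$'s own queries in that phase count as lookahead revisions of $\tilde M$.

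Concretely, assuming without loss of generality that $p$ is monotone with $p(n)\geq n$, the machine $\tilde M$ maintains a budget $B$, initially $B := p(|\str a|)$, and issues a dummy query of length $B$ (say $\sdzero^B$), discarding the answer. It then simulates $M$ step by step. Whenever $M$ receives an oracle answer $y$ whose length exceeds $|\str a|$ and all previously received answers---i.e.\ $M$ undergoes a length revision---$\tilde M$ updates $B := p(|y|)$, issues a fresh dummy query of that length, and resumes the simulation. Correctness of $\tilde M$ is immediate since the dummy queries do not affect $M$'s state.

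For the bound on lookahead revisions, I would invoke Lemma~\ref{resu:step-counts} in its Kawamura--Steinberg form: at every time $k$ during $M$'s simulation $k\leq p(m_{k,\varphi,\str a})$, where $m_{k,\varphi,\str a}$ is the maximum of $|\str a|$ and of the oracle answers received in the first $k$ steps. Within any phase of $\tilde M$, this maximum equals the length of the most recent big answer received by $M$, hence $M$'s queries in that phase all have size at most the current $B$, and thus do not exceed the size of the dummy query that opened the phase. Consequently only the dummy queries can be lookahead revisions of $\tilde M$, and there are at most $r+1$ of them---one per phase.

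Finally, the step-count: observing that $m_{\varphi,\str a}^{\tilde M}\geq m_{\varphi,\str a}^M$, the total running time of $\tilde M$ is bounded by the simulation time $p(m^M)$ plus at most $r+1$ dummy queries, each of length at most $p(m^M)$, giving $(r+2)\,p(m^{\tilde M})$, which is polynomial in $m^{\tilde M}$. The only subtle point, and the main obstacle to the proof, is the bound on intra-phase query sizes of $M$; this is where the strengthened form of Lemma~\ref{resu:step-counts} is essential, since the plain step-count condition bounds only the \emph{total} running time and not the partial running times that I need to control phase by phase.
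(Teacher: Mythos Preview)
Your proof is correct and follows essentially the same construction as the paper: issue a dummy query of size $p$ applied to the current maximum answer length, simulate $M$ until the next length revision, then repeat. The paper's argument that ``$M$ does not have enough time to formulate a query big enough'' is exactly your phase-by-phase bound via the Kawamura--Steinberg form of the step-count (Lemma~\ref{resu:step-counts}); you are simply more explicit about this dependence, and your count of $r+1$ lookahead revisions is the accurate one.
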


        \begin{proof}
            let $M$ be a machine that proves that an operator is strongly polynomial-time computable.
            Let $p$ be a polynomial step-count of the machine and let $r$ be its number of length revisions.
            Consider the machine $N$ that works as follows:
            First it checks whether $p(\length{\str a})$ is bigger than $\length{\str a}$ and if so poses an oracle query of this size.
            This leads to a lookahead revision.
            Then it follows the first $p(\length{\str a})$ steps that $M$ takes while remembering the size of the biggest oracle answer that it receives.
            This does not lead to a lookahead revision as the machine $M$ does not have enough time to formulate a query big enough.
            Since $p$ is a step count, $M$ can at most take $p(\length{\str a})$ steps before it has to either terminate or encounter a length revision.
            If it terminates, let $N$ return its return value.
            If it encounters a length revision, then the maximum $m$ of the return values that $N$ recorded is bigger than $\length{\str a}$ and $N$ repeats the proceedure with $p(\length{\str a})$ replaced by $p(m)$.
            Note that the whole process can at most be iterated $r$ times as each time $M$ has to either terminate or encounter a length revision.
            Thus, in the last repetition, $M$ has to terminate and it follows that $M$ and $N$ compute the same operator.
            Also note that during each of the repetitions $N$ encounters exactly one lookahead revisions and thus the number of lookahead revisions of $N$ is bounded by the number of length revision of $M$.
            Finally $N$ has a polynomial step-count as the number of steps required to carry out the operations described above can easily be bounded by a polynomial.
            It follows that $N$ proves that the operator computed by $M$ belongs to $\mpt$.
        \end{proof}
        Although $\mpt$ is more powerful than $\spt$, it is still not powerful enough to capture all of $\p$:
        \begin{example}[$\mpt \subsetneq \p$]\label{ex: filr not in p}
            Consider the operator $F:\B\to\B$ defined as follows:
            First recursively define a sequence of functions $F_i:\B \to \Sigma^*$ by
            \[ F_0(\varphi) := \epsilon \quad\text{and}\quad F_{n+1}(\varphi) := \big((\varphi\circ \varphi)(F_{n}(\varphi))\big)^{\leq \length{\varphi(\epsilon)}}. \]
            That is: start on value $\epsilon$ and iterate $n$ times the process of applying the function $\varphi\circ \varphi$ and then truncating the result to have length $\length\varphi(0)$.
            Set
            \[ F(\varphi)(\str a) := F_{\length{\str a}}(\varphi). \]
            We claim that this operator is polynomial-time computable but can not be computed by a machine that has finite lookahead revision and a polynomial step-count.
            To see that $F$ is polynomial-time computable note that for any string $\str a$ of length bigger than one we have
            \[ \length{F(\varphi)(\str a)} = \length{(\varphi\circ \varphi)(F_{\length{\str a}-1}(\varphi)(\str a))^{\leq \length{\varphi(\epsilon)}}} \leq \length{\varphi(\epsilon)} = \length{\varphi}(0). \]
            Thus, the straight-forward algorithm that on input $\str a$ and oracle $\varphi$ computes the sequence $F_0(\varphi),\ldots,F_{\length{\str a}}(\varphi)$ runs in time about $\length{\str a} \cdot (\length{\varphi}(0) +\length{\varphi}(\length{\varphi}(0)))$.

            To see that $F$ cannot be computed by a machine with polynomial step-count and finite lookahead revision, let $M$ be a machine that computes $F$ and has a polynomial step-count $p$.
            Without loss of generality assume $p(n)\geq n$.
            For any given number $k\geq 1$ construct a sequence of oracles $\psi_0,\ldots, \psi_k$ such that $\psi_i$ forces $i$ lookahead revisions on input $\sdone^k$.
            To do so, first choose some $m > p(k)$ such that $(p+1)^k(m)< 2^{m}-k-2$.
            This is fulfilled by almost all natural numbers as the right hand side grows exponentially while the left hand side is a polynomial, in particular an appropriate $m$ exists.

            Recursively define the sequence $\psi_i$ in parallel with a sequence $\str a_j$ of pairwise distinct strings of length $m$ such that each string has at least one digit that is $\sdzero$.
            Let $\psi_0$ be the constant function returning the empty string and for $i\geq 1$ define $\psi_i$ from the strings $\str a_1,\ldots,\str a_i$ by
            \[ \psi_i(\str b):=\begin{cases} \str a_1 & \text{if }\str b = \epsilon \\
            \str a_{j} &\text{if } \str b = \sdone^{(p+1)^{j-1}(m)} \text{ for some } 2\leq j \leq i \\ \sdone^{(p+1)^{j-1}(m)} &\text{if } \str b = \str a_j\text{ for some }j\leq i \\
            \epsilon &\text{otherwise.}\end{cases} \]
            The function $\psi_i$ is well-defined due to the assumptions about the sequence $\str a_j$.

            The string $\str a_j$ is recursively defined from $\psi_0,\ldots,\psi_{j-1}$ as follows:
            For $j=1$ recall that $\psi_0$ was defined to be the constant function returning the empty string.
            Consider the computation $M^{\psi_0}(\sdone^k)$.
            Since $p$ is a step-count and $p(k)< m \leq 2^m-2$ (the second inequality follows since $m\geq2$ due to the assumptions), there exists at least one string $\str a_1$ of length $m$ that neither coincides with $M^{\psi_0}(\sdone^k)$ nor $\sdone^m$ nor with any of the oracle queries asked in this computation.
            Now assume that all $\str a_{j'}$ with $j'\leq j<k$ have been defined.
            This means that $\psi_j$ is defined.
            Consider the computation of $M^{\psi_j}(\sdone^k)$.
            Since the length of the return values of $\psi_j$ is bounded by $(p+1)^j(r)$, the number of steps in this compuation is smaller than
            \[ p((p+1)^j(k))\leq (p+1)^{j+1}(k)<2^m-k-2. \]
            Thus, there exists a string $\str a_{j+1}$ of length $m$ that is different from the $j+2<k+2$ strings $\str a_1,\ldots,\str a_j$, $\sdone^m$ and $M^{\psi_j}(\sdone^k)$ as well as all the oracle queries that are asked in the computation.

            This finishes the construction and it is left to prove that $\psi_i$ forces $i$ lookahead revisions.
            The proof proceeds by induction on $i$.
            The case $i=0$ is trivial as forcing $0$ lookahead revisions does not require doing anything.
            Next assume that the assertion has been proven for $i<k$.
            Claim that the string $\sdone^{(p+1)^i(m)}$ is posed as a query in the computation $M^{\psi_{i+1}}(\sdone^k)$.
            Since the values of $\psi_i$ and $\psi_{i+1}$ only differ in the strings $\str a_{i+1}$ and $\sdone^{(p+1)^i(m)}$, the computations of $M$ on $\psi_i$ and $\psi_{i+1}$ coincide up to the point where either of these strings is posed as a query.
            Since the computation with oracle $\psi_i$ takes at most
            \[ p(\max_{\str a\in\Sigma^*}\{\length{\psi_i(\str a)},r\}) = p((p+1)^{i-1}(m)) < (p+1)^i(m) \]
            steps, the machine either queries $\str a_i$ or none of the two.
            Towards a contradiction assume that the machine poses neither of the two queries.
            The runs on $\psi_{i+1}$ and $\psi_i$ coincide and -- by the construction of the string $\str a_i$ -- the return value is different from $\str a_i$.
            Therefore, the run on the oracle $\psi'$ which is identical to $\psi_{i+1}$ everywhere but on input $\sdone^{(p+1)^i(m)}$ where it returns $\str a_i$ instead of $\str a_{i+1}$ leads to an identical run.
            The fact that $F(\psi')(\sdone^k) = \str a_i$ leads to a contradiction with the assumption that $M$ computes $F$.
            Thus, $\str a_i$ has to be posed as an oracle query in the computation $M^{\psi_{i+1}}(\sdone^k)$.
            The same argument with the same counter example function $\psi'$ also proves that the string $\sdone^{(p+1)^i(m)}$ has to be posed as a query.
            Since no query before the query $\str a_i$ can have had size $(p+1)^i(m)$, posing this query requires an additional lookahead revision.
        \end{example}

    \subsection{Existence of step-counts and an application}

        While the notion of finite lookahead revision is most useful in the presence of a polynomial step-count, it is also meaningful for general operators.
        Consider the iteration operator $ F(\varphi)(\str a)= \varphi^{\length{\str a}}(\sdzero)$ from \Cref{ex:repeated squaring}: intuitively it should be possible to prove that this operator cannot be computed by a machine with finite lookahead revision at all.
        However, the usual construction of a counterexample oracle only generalizes under the assumption that the machine considered has a step-count.
        Luckily, using topological methods that originate from computable analysis, it is possible to prove that step counts do always exist:
        In \cite{MR2090390}, Schr\"oder notes that a machine that computes a total operator has a well-defined running time.
        In the setting of regular machines the existence of a function $t:\NN\to\NN$ that bounds the running time is obvious: the set of all strings of length smaller than a fixed natural number $n$ is finite and a time bound of a machine without an oracle can therefore be obtained by taking the maximum of the times of its runs on each of these.
        Recall that the size $\length \varphi\colon \NN\to\NN$ of an oracle is defined by
        \[ \length\varphi(n) := \max_{\length{\str a} \leq n} \{\length{\varphi(\str a)}\}. \]
        The sets $K_l$ of all oracles whose size is bounded by a given function $l\colon \NN \to\NN$ is almost never finite.
        However, if Baire space is given its standard topology, all of the sets $K_l$ are compact.
        Schr\"oder notes that for any machine $M$, the time function $(\varphi,\str a) \mapsto \timef_M(\varphi,\str a)$ is a continuous function.
        Since continuous functions assume their maximum over compact sets, the time function is well defined under the assumption that the machine terminates on all inputs.
        Thus for any machine $M$ there exists a function $T: \NN^\NN \times \NN$ such that for all $\varphi\in\B$ and $\str a \in \S$
        \[ \timef_M(\phi,\str a) \leq T(\length\varphi,\length{\str a}). \]
        From such a time function one may obtain a step-count just like a polynomial step-count was obtained from a polynomial running time in \Cref{resu:polytime implies opt}.
        We give a direct proof of a stronger statement by adapting Schr\"oders' methods to the \opt setting.
        \begin{lemma}[Existence of step-counts]\label{resu:existence of step-counts}
            Any oracle machine that computes a total operator has a computable step-count.
        \end{lemma}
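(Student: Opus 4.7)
My plan is to define a candidate step-count explicitly, verify the step-count property using the oracle-restriction trick from the proof of \Cref{resu:polytime implies opt}, justify finiteness by the Schr\"oder-style compactness argument sketched before the lemma, and obtain computability from K\"onig's lemma applied to the associated computation tree. Writing $l_n\colon\NN\to\NN$ for the constant function with value $n$ and $K_n := \{\varphi\in\B : \length\varphi \le l_n\}$, I would set
\[ t(n) := \max\bigl\{\timef_M(\varphi,\str a) : \length{\str a} \le n,\ \varphi \in K_n\bigr\}. \]

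First I would check that $t$ is well-defined and is a step-count. In the standard product topology on $\B$, the set $K_n = \prod_{\str b\in\S}\Sigma^{\le n}$ is a product of finite discrete spaces, hence compact by Tychonoff, while the set of inputs of length $\le n$ is already finite. Because $M$ is total and any halting computation inspects only finitely many oracle values, $\timef_M$ is everywhere finite and continuous, so it is bounded on the compact domain and the maximum exists. For the step-count property, fix $\varphi$ and $\str a$, set $m := m_{\varphi,\str a}$, and replace $\varphi$ by the oracle $\psi$ that agrees with $\varphi$ on every query posed during $M^\varphi(\str a)$ and returns $\e$ elsewhere. The two computations coincide step for step, so the running times agree; all nonempty values of $\psi$ have length at most $m$, so $\psi\in K_m$; and $\length{\str a}\le m$. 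Hence $\timef_M(\varphi,\str a) = \timef_M(\psi,\str a) \le t(m) = t(m_{\varphi,\str a})$, exactly in the style of the proof of \Cref{resu:polytime implies opt}.

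For computability I would evaluate $t(n)$ by exhaustive breadth-first search of the tree of possible computations of $M$ on inputs $\str a$ of length $\le n$ under oracles in $K_n$: each node records the current machine configuration together with the finite partial function of committed (query, answer) pairs, the tree branches at a fresh oracle query over the finitely many answers of length $\le n$ and continues deterministically on a repeated query, and leaves are halting configurations. The tree is finitely branching; by the compactness argument above, every path corresponds to a halting run on some $\varphi\in K_n$ and is therefore finite; by K\"onig's lemma the whole tree is finite, so BFS terminates and returns its maximum leaf depth, which computes $t(n)$. The main obstacle is exactly this last step: without the continuity-plus-compactness argument there is no purely combinatorial reason for the tree to have bounded depth, since along an individual branch the machine could in principle ask arbitrarily many distinct oracle queries, and any attempt to bound the depth directly would beg the question. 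Totality of the operator, mediated by compactness of $K_n$, is what rules this out.
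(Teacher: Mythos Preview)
Your proposal is correct and follows the paper's approach: same candidate $t(n) = \max\{\timef_M(\varphi,\str a) : \varphi\in K_n,\ \length{\str a}\le n\}$, same compactness argument for well-definedness, and essentially the same computability argument (the paper phrases your K\"onig's-lemma step as ``using compactness to verify the computation paths are exhausted''). Your verification that $t$ is a step-count is in fact tidier than the paper's: you reuse the single-shot oracle-replacement trick from \Cref{resu:polytime implies opt} to land in $K_{m_{\varphi,\str a}}$ directly, whereas the paper truncates $\varphi$ to $\varphi_{\le n}$ and walks through successive points where the truncated and original oracles diverge.
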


        \begin{proof}\footnote{This proof has significanlty been simplified thanks to remarks of an anonymous referee}
            Fix an oracle machine $M$ that computes a total operator.
            Note that the time function $\timef_{M}: \B\times \Sigma^* \to\NN$, $(\varphi,\str a)\mapsto \timef_{M}(\varphi,\str a)$
            is a continuous mapping.
            The set
            \[ K_m:= \{\varphi\in\B\mid \forall n\in\NN\colon\length{\varphi}(n)\leq m \} \]
            is a compact subset of $\B$.
            Since a continuous function assumes its maximum on a compact set, we may define a function $t:\NN\to\NN$ via
            \[ t(n) := \max\{\timef_{M}(\varphi,\str a) \mid \varphi \in K_n,\length{\str a}\leq n\}. \]
            Moreover, this function is computable as its values can be obtained by simulating the computation of $M$ (again using the fact that $K_m$ is compact and therefore at some point it can be verified that the possible computation paths are exhausted).

            It remains to prove that $t$ is indeed a step count of $M$:
            Given an oracle $\varphi$ and an input $\str a$ note that since $M^\varphi(\str a)$ terminates, it only asks queries from a finite set $Q_{\varphi,\str a}$.
            As before, let $m_{\varphi,\str a}$ denote the biggest return value of $\varphi$ on any of these queries.
            Replace the oracle $\varphi$ by the oracle $\psi$ that returns the same value as $\varphi$ on any element of the queryset $Q_{\varphi,\str a}$ and the empty string otherwise.
            The runs of the machine on $\varphi$ and $\psi$ are identical, clearly $\psi \in K_{m_{\varphi,\str a}}$ and $\length{\str a} \leq m_{\varphi,\str a}$ by definition.
            Thus, $\timef_M(\varphi, \str a) = \timef_M(\psi,\str a) \leq t(m_{\varphi,\str a})$ by the definition of $t$.
            Since $\varphi$ and $\str a$ were arbitrary, this means that $t$ is a step-count for $M$.
        \end{proof}

        As an application of the above lemma and for the sake of having a rather simple construction of a counter-example oracle spelled out in this paper we revisit \Cref{ex:repeated squaring}.
        \begin{example}[Repeated squaring revisited]\label{ex:repeated squaring revisited}
            The operator $F$ from \Cref{ex:repeated squaring}, namely
            \[ F(\varphi)(\str a) := \varphi^{\length{\str a}}(\sdzero) \]
            is not polynomial-time computable.
            Thus, it can in particular not be computed by a machine that has a polynomial step-count and finite lookahead revision.

            More generally, there does not exist an oracle machine with finite lookahead revision that computes $F$.
            Towards a contradiction assume that $M$ was an oracle machine that computes $F$.
            Let $t$ be a step-count of $M$ which exists by \Cref{resu:existence of step-counts}.
            Without loss of generality assume that $t$ is strictly increasing.
            Define oracles $\varphi_n$ that force big numbers of lookahead revisions as follows.
            Let
            \[ \varphi_n(\str a) :=
                \begin{cases}
                    \sdzero^{t(n)+1} & \text{if } \str a = \sdzero \\
                    \sdzero^{t^{k+1}(n)+1} &\text{if }\str a = \sdzero^{t^k(n)+1}\text{ for some } k \\
                    \varepsilon & \text{otherwise.}
                \end{cases}
            \]
            For $n>0$ the machine $M$ encounters at least $n$ lookahead revisions in the computation $M^{\varphi_n}(\sdzero^n)$.
            To see this, first note that the query $\sdzero^{t^{n}(n) +1}$ must have been asked in this computation.
            This is because if it had not, we could change $\varphi_n$ to return $\varepsilon$ on this query instead and thereby change the return value of $F$ without giving $M$ the possibility to reconsider, which would render the value $M$ decided on incorrect.
            Next, argue by induction that for $k \leq n$ the machine $M$ can only pose queries of size at most $t^k(n)$ before doing its $k$-th length revisions.
            For $k=1$ this is due to $t$ being a step count.
            Now assume the claim has been proven for $k$.
            The time the machine is granted in this case is bounded by $t$ applied to the biggest oracle answer that has previously been given and the output.
            By the induction hypothesis and since $t$ is increasing the oracle queries that have previously been asked are bounded by $t^k(n)$, and therefore the answers are bounded by $\length{\varphi_n}(t^k(n)) \leq t^k(n)$.
              Thus, the time the machine can take without encountering another lookahead revision is bounded by $t^{k+1}(n)$.
            Thus, also the size of queries before encountering the $k+1$-st length revision is bounded by this value.

            In total it follows that the machine has to make at least $n$ length revisions to ask the query $\sdzero^{t^n(n)+1}$ which it has to ask to provide a correct return value.
            Since $t$ can be chosen computable, so can the familiy $\varphi_n$, even uniformly in the index of $M$.
            Furthermore, the familiy can be replaced by a single function using a diagonalization argument or arguing directly.
        \end{example}

\section{Lambda-calculi for feasible functionals}\label{sec:lambda}

The preceding chapter presented two classes $\spt$ and $\mpt$ of operators based on simple syntactical restrictions to POTMs, both of which fail to capture all of \p.
The rest of the paper proves that this is exclusively due to a failure of closure of these classes under composition.

Composition is a notion from the operator setting, but for this chapter the functional standpoint is more convenient.
In the functional setting, there are more ways of combining functionals.
consider for instance $F,G\colon \B\times \S\to\S$:
one may apply $G$ and hand the resulting string as input to $F$, i.e. send $\varphi$ and $\str a$ to $F(\varphi,G(\varphi,\str a))$, or leave the string argument in $G$ open and use this as function input for $F$: I.e. send $\varphi$ and $\str a$ to $F(\lambda \str b. G(\varphi,\str b),\str a)$.
The latter captures the composition of operators and uses the familiar notation for lambda-abstraction.
One may go one step further and also use lambda-abstraction over $\varphi$ and $\str a$ to express the two ways to combine functionals by terms in the lambda-calculus with $F$ and $G$ as constants.
On the other hand, any term in the lambda-calculus with constants from a given class of functionals can be interpreted as a functional again.
It should be noted that in general these functionals need not be type-one or two anymore as the lambda-calculus provides variables for each finite type.

This section reasons about closures of classes of functionals under $\lambda$-abstraction and application as a subsitute for closure under composition in the operator setting.
We do not attempt to give a self-contained presentation of the tools from lambda-calculus needed here and point to \cite{MR1241248} for more details.
The most important parts are also gathered in \ref{ap: lambda}.
Our primary focus is on using such calculi {\em definitionally} -- that is we are interested in the denotational semantics of type-one and -two terms, and only require operational notions to reduce arbitrary terms to such terms.
In particular, we consider systems with constant symbols for every function in some type-one or -two class, without necessarily giving reduction rules for such symbols.
Note that in order to take advantage of the setting of applied lambda calculus we begin by defining a class of functionals at all finite types. We then obtain subclasses of interest by the standard technique of taking \demph{sections}.

\begin{definition}\label{def:lambda definability}
For a class $\mathbf{X}$ of functionals, let $\lambda(\mathbf{X})$ denote the set of simple-typed $\lambda$-terms where a constant symbol is available for each element of $\mathbf{X}$.
For a set $T$ of terms the \demph{one-section of} $T$, denoted $\onesec{T}$, is the class of functions represented by type-one terms of $T$.
The \demph{two-section of} $T$, denoted $\twosec{T}$, is the set of functionals represented by type-two terms of $T$.
\end{definition}
Denote the class of polynomial-time computable functions by $\pf$.
It is well-known that $\onesec{\lambda(\pf)}=\pf$.
Seth proves that $\twosec{\lambda(\pf)}=C_0$, where $C_0$ is his class of functionals computed by POTMs only allowed to access their oracle a finite number of times \cite{MR1238294}.

Mehlhorn's schematic characterization of polynomial-time \cite{MR0411947} fits quite nicely into the lambda calculus approach.
However, the limited recursion on notation scheme translates to a type-three constant as it produces a type-two functional from a set of type-two functionals.
Work by Cook and Urquhart revealed that it is possible to use a type-two constant instead \cite{MR1241248}.
Cook and Urquart consider lambda-terms with symbols for a collection of basic polynomial-time computable functions as well as one type-two symbol $\Rec$ capturing limited recursion on notation.
The type-two recursion functional given in \cite{MR1241248} is slight different than $\Rec$, which is more convenient in our setting. We will use $\Rec'$ to denote the version of Cook and Urquhart. 
Note that every binary string is either the empty string $\epsilon$ or can be written as $\str c i$ for a digit $i$ and a strictly shorter string $\str c$.
Set
\begin{equation*}
\Rec'(\varphi, \str a, \psi, \epsilon) := \str a \quad\text{and}\quad
\Rec'(\varphi, \str a, \psi, \str ci) := \left\{\begin{array}{ll}
                                            \str t & \text{if $|\str t| \le |\psi(\str ci)|$;}\\
                                            \psi(\str ci) & \text{otherwise.}
                                    \end{array}\right.
\end{equation*}
where $\str t = \varphi(\str ai, \Rec'(\varphi,\str a, \psi, \str c))$.
Note that this gives the input function $\varphi$ the type $\S \times \S \to \S$ but still defines a functional of type two.
Readers familiar with the Mehlhorn's limited recursion on notation scheme should note that this is an application of the scheme to feasible functionals and that $\Rec'$ itself is a feasible functional.
The next section verifies this directly for a very similar functional.

Cook and Kapron consider a version of the Cook-Urquhart system which includes constant symbols for all type-one polynomial-time functions \cite{MR1232924}.
The classes of functionals that correspond to the terms in either of these versions coincide.
Kapron and Cook call these functionals the \demph{basic feasible functionals} and denote them by $\bff$.
Thus, $\bff=\lambda(\pf\cup\{\Rec'\})$ up to identification of lambda terms with the functionals they represent.
Adding all polynomial-time functions as symbols has the advantage that $\Rec'$ need only be used for properly higher-order recursions.

The class $\bff$ contains functionals of all finite types and can be cut down to the types we are interested in by considering the sections $\bff_1$ and $\bff_2$.
While there are reasonable doubts whether the class \bff captures feasiblity in all finite types \cite{MR3632199}, the significance of its two-section is demonstrated by  the following result of Kapron and Cook \cite{MR1374053}.
\begin{theorem}[$\p = \bff_2$]\label{thm:Kap Cook} The basic feasible functionals of type-two are exactly the polyno\-mial-time functionals.
\end{theorem}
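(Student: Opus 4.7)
The plan is to prove the two inclusions $\bff_2 \subseteq \p$ and $\p \subseteq \bff_2$ separately, using very different techniques for each.

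For $\bff_2 \subseteq \p$, the plan is to verify that $\p$ contains all constant symbols used to generate $\bff$ and is closed under the two-section-preserving operations of the typed $\lambda$-calculus. First, every $f \in \pf$ is trivially polynomial-time computable as a type-two functional since it ignores its oracle argument. Second, I would check directly that $\Rec' \in \p$: the straightforward algorithm iterates along $\str c$ from the empty prefix upwards, at each stage computing a single application of $\varphi$ and truncating the result to length $|\psi(\str ci)|$. Since the intermediate values are always bounded by $\length\psi(\length{\str c})$ and each step requires one call to $\varphi$ (whose answer has size at most $\length\varphi$ applied to a polynomial in the input sizes), the running time is a second-order polynomial. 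Finally, closure of $\p$ under application at type two follows from Theorem~\ref{resu: closure under composition}, and closure under $\lambda$-abstraction is handled by $\beta$-normalization of type-one and type-two terms: one argues that any such term reduces to an application of an element of $\p$ to arguments in $\p$ (or in $\pf$ on the type-one side), and then invokes the composition theorem again.

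For $\p \subseteq \bff_2$, which is the substantive direction, the plan is to show that any OTM $M$ whose running time is bounded by a second-order polynomial $P$ can be simulated by a type-two term built from $\pf$ and $\Rec$. The first step is to encode configurations of $M$ (tape contents, head positions, state, plus the contents of the oracle query and answer tapes) as strings, and express the one-step transition as a polynomial-time function of a configuration plus the most recent oracle answer. The second step is to iterate this transition sufficiently many times using $\Rec$: the key is to construct, as a type-two $\lambda$-term, a string $\str c$ whose length dominates $P(\length\varphi, \length{\str a})$, since $\Rec$ iterates roughly $|\str c|$ times. The clever use of $\Rec$ is that by choosing $\psi$ to be a function that outputs strings obtained by applying $\varphi$ to suitable padded inputs, one can make $|\Rec(\ldots)|$ grow like iterates $(p \circ \length\varphi)^k(q(\length{\str a}))$, exactly matching the inductive structure of second-order polynomials. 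A structural induction on $P$ then produces the desired bounding string.

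The main obstacle is precisely this second step: translating the recursive clause $(l,n)\mapsto l(P(l,n))$ in the definition of second-order polynomials into an application of the bounded-recursion constant $\Rec$. Here the subtle point is that $\Rec$ is only a type-two functional, so the nesting of $\length\varphi$-iterates must be produced not by recursing at a higher type but by recursing along a string whose length itself was produced by a previous use of $\Rec$; a careful bookkeeping argument shows this is possible using composition of type-two terms. Once the bounding string is in hand, the simulation proceeds: applying $\Rec$ to the one-step transition (as $\varphi$'s role in $\Rec$) with truncation bound derived from $\varphi$ produces the final configuration of $M$, from which the output of $F$ can be read off by an additional polynomial-time projection.
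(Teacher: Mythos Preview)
The paper does not prove this theorem at all: it is quoted as a result of Kapron and Cook and attributed to \cite{MR1374053}, with no argument given. Indeed, the paper adopts the second-order-polynomial time bound characterization as its \emph{definition} of $\p$ (Definition~\ref{def:BPT}) precisely so that this equivalence can be taken as an imported fact rather than something to be established. So there is no ``paper's own proof'' to compare against.

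That said, your outline is recognisably the original Kapron--Cook strategy, and the two directions are identified correctly. A few remarks on where the sketch is thin. For $\bff_2 \subseteq \p$, the real work is not showing $\Rec' \in \p$ (which is easy) but showing that $\p$ is stable under the formation of arbitrary type-two $\lambda$-terms over $\p$-constants; such terms may contain subterms of arbitrarily high type, and appealing to Theorem~\ref{resu: closure under composition} only handles type-two composition. Your one-line appeal to $\beta$-normalisation is the right idea, but the argument that a $\beta$-normal type-two term decomposes into an iterated type-$\le 2$ composition of its constants needs to be spelled out (this is where Cook--Urquhart-style structural induction on normal forms comes in). For $\p \subseteq \bff_2$, your plan is essentially the Kapron--Cook simulation: the key lemma is exactly what you describe, namely that for every second-order polynomial $P$ there is a $\bff_2$ term producing a string of length at least $P(\length\varphi,\length{\str a})$, proved by structural induction on $P$ with the $l(P(l,n))$ clause handled via $\Rec$. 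That part is right in spirit; the bookkeeping you flag as ``subtle'' is genuinely where the effort lies.
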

It is also true that $\bff_1=\pf$.
Cook's notion of intuitively feasible functionals is based on this property of $\onesec{\bff}$:
Cook calls a type-two functional $F$ \demph{intuitively feasible} if the corresponding operator is in $\opt$ and adding it to \bff does not change the one-section, i.e. $\lambda(\pf\cup\{\Rec',F\})_1=\pf$.

\subsection{Limited recursion as an operator}\label{sec:limited recursion as an operator}

As mentioned, using a recursion functional slightly different from Cook and Urquart's $\Rec'$ is more convenient for our purposes.
Consider the \demph{limited recursion functional} $\Rec$, i.e. the functional defined by $\Rec(\varphi, \str a, \str b, \epsilon) := \str a$ and
\begin{align*}
\Rec(\varphi, \str a, \str b, \str{c}i) &:= \varphi(\str{c}i, \Rec(\varphi, \str a, \str b, \str{c}))^{\le |\str b|}
\end{align*}
First we establish that we may indeed swap the recursion functional $\Rec'$ with this functional.
\begin{proposition}[$\lambda(\pf\cup\{\Rec\}) = \bff$]
The recursion functionals $\Rec'$ and $\Rec$ are equivalent in the sense that
\[ \Rec \in \twosec{\lambda(\pf\cup\{\Rec'\})} \quad\text{and}\quad \Rec' \in \twosec{\lambda(\pf\cup\{\Rec\})} \]
\end{proposition}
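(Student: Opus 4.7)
The plan is to establish each inclusion in turn, and the first is much easier than the second.

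For $\Rec \in \twosec{\lambda(\pf\cup\{\Rec'\})}$, I would exhibit an explicit $\lambda$-term. Take the closed type-1 function $\psi_{\str b} := \lambda\str d.\,\sdzero^{|\str b|}$ (built from the \pf-function $\str b\mapsto \sdzero^{|\str b|}$) as a constant bound, and replace $\varphi$ by its pre-truncation $\varphi' := \lambda\str d\,\str y.\,\varphi(\str d,\str y)^{\le|\str b|}$, where $(\str x,\str n)\mapsto \str x^{\le|\str n|}$ is in \pf. The candidate term is
\[
\lambda\varphi\,\str a\,\str b\,\str c.\ \Rec'(\varphi',\,\str a,\,\psi_{\str b},\,\str c).
\]
A one-line induction on $|\str c|$ verifies this denotes $\Rec$: the base case is immediate, and at step $\str ci$ the candidate value $\str t = \varphi'(\str ci,\text{prev}) = \varphi(\str ci,\text{prev})^{\le|\str b|}$ automatically satisfies $|\str t|\le|\str b|=|\psi_{\str b}(\str ci)|$, so $\Rec'$ always chooses $\str t$ over its fallback $\psi_{\str b}(\str ci)$, matching $\Rec$'s truncation rule exactly.

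For the reverse inclusion $\Rec'\in\twosec{\lambda(\pf\cup\{\Rec\})}$ the obstacle is that $\Rec$ has only a static string bound while $\Rec'$ uses the position-dependent bound $|\psi(\str d)|$. The idea is to push the $\psi$-bounding into the step function: define
\[
\varphi''(\str d,\str y) \;:=\; \begin{cases}\varphi(\str d,\str y) & \text{if }|\varphi(\str d,\str y)|\le|\psi(\str d)|\\ \psi(\str d) & \text{otherwise,}\end{cases}
\]
which is $\lambda$-definable in $\varphi,\psi$ and \pf\ (via length comparison and conditional). By construction $|\varphi''(\str d,\str y)|\le|\psi(\str d)|$, so $\Rec(\varphi'',\str a,\str B,\str c) = \Rec'(\varphi,\str a,\psi,\str c)$ whenever the fixed bound $\str B$ satisfies $|\str B|\ge\max_{\str d\subseteq\str c,\,\str d\neq\e}|\psi(\str d)|$. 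The remaining task is to construct such a $\str B$ inside $\lambda(\pf\cup\{\Rec\})$ by accumulating the $\psi$-values along the prefixes of $\str c$; this can be done with an auxiliary $\Rec$ call whose own bound is a \pf-combination of $\psi(\str c)$ and $\str c$ such as the $(|\str c|+1)$-fold concatenation of $\psi(\str c)$, which handles the monotone case directly.

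The hard part will be covering the general non-monotone $\psi$, where $|\psi(\str c)|$ need not dominate $|\psi(\str d)|$ for proper prefixes $\str d$; a naive concatenation inside $\Rec$ is cut off by the bound one starts with. Two ways around this present themselves. Either one threads a growing witness string through the state of a single outer $\Rec$ by tupling it with the intermediate value (using the tupling and projection functions from \pf), so that the same recursion produces both the answer and a sufficiently large bound witness; or one invokes Theorem~\ref{thm:Kap Cook} after verifying that $\Rec'$ is itself polynomial-time computable, since each of its $|\str c|+1$ steps has cost polynomial in $|\str a|$, $|\varphi|$ and $|\psi|(|\str c|)$, giving an obvious second-order polynomial time bound. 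The second route yields $\Rec'\in\p=\bff_2=\twosec{\lambda(\pf\cup\{\Rec\})}$ immediately, at the price of producing no explicit $\lambda$-term.
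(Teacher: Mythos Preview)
Your first inclusion is correct and matches the paper's argument.

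For the reverse inclusion you correctly push the $\psi$-bounding into the step function and reduce the problem to producing a string $\str B$ with $|\str B|\ge\max_{\str d\subseteq\str c}|\psi(\str d)|$, but neither of your two routes to $\str B$ closes. Route (b) is circular: Theorem~\ref{thm:Kap Cook} identifies $\p$ with $\bff_2$, and $\bff$ is \emph{by definition} the Cook--Urquhart system, i.e.\ $\lambda(\pf\cup\{\Rec'\})$; so from $\Rec'\in\p$ you only obtain the triviality $\Rec'\in\twosec{\lambda(\pf\cup\{\Rec'\})}$. The equality $\bff_2=\twosec{\lambda(\pf\cup\{\Rec\})}$ you write at the end is precisely what the proposition is meant to establish. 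Route (a) does not escape the truncation obstacle you yourself flagged: anything threaded through a $\Rec$ call is cut to length $|\str b|$ at every step, so a ``growing witness'' cannot outgrow whatever fixed $\str b$ you committed to up front, and tupling it with the intermediate value only makes the state longer, not the bound larger.

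The missing idea is that one need not compute the maximal \emph{value} of $\psi$ over the prefixes of $\str c$; it suffices to compute the \emph{argmax}, i.e.\ a prefix $\str c'\subseteq\str c$ at which $|\psi|$ is largest. Since every candidate is a prefix of $\str c$ and hence has length at most $|\str c|$, an auxiliary $\Rec$ with bound $\str c$ itself can maintain ``best prefix so far'' as its state, comparing $\psi(\str s)$ against $\psi(\str t)$ inside the step function. One evaluation of $\psi$ at the resulting argmax then yields the required $\str B$, and your $\varphi''$ (the paper's $\ell(\varphi(\cdot,\cdot),\psi(\cdot))$) finishes the job.
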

\begin{proof}
  By definition $\Rec(\varphi,\str a, \str b, \str c) = \Rec'(\lambda \str s\lambda \str t.\varphi(\str s,\str t)^{\le |\str b|}, \str a, \lambda \str s.\str b, \str c)$.
  Since the function $\lambda \str b \lambda \str s \lambda \str t. \varphi(\str s, \str t)^{\leq \length{\str b}}$ is polynomial-time computable and can be replaced by its symbol, this proves the assertion.
  
For the other direction first show that the operator
\[ F(\psi)(\str c):= \max_{\str c'\subseteq \str c}F(\psi)(\str c') \]
may be defined using $\Rec$.
Define $F'$ such that $F(\psi)(\str c)=\psi(F'(\psi,\str c))$.
Thus,
\[
F'(\psi, \str c)=\Rec(\lambda \str s\lambda \str t.M(\psi, \str s, \str t), \epsilon, \str c, \str c),
\]
where $M(\psi, \str s, \str t)= \str s$ if $\psi(\str s) > \psi(\str t)$ and $\str t$ otherwise.
Let $\ell$ be the polynomial-time function that satisfies: $\ell(\str s,\str t)=\str s$ if $|\str s|\le|\str t|$ and $\str t$ otherwise.
Then
\[
\Rec'(\varphi,\str a, \psi, \str c) = \Rec(\lambda \str s\lambda \str t.\ell(\varphi(\str s,\str t),\psi(\str s)),\str a, F(\psi,\str c), \str c).
\]
Again by availability of symbols for $M$ and $l$, this proves the assertion.
\end{proof}

A priori, the statement that $\Rec$ is moderately polynomial-time computable does not make sense:
$\mpt$ is a class of operators.
Each of the elements $\mpt$ can be used to obtain a functional of type $\B\times \S \to \S$, but $\Rec$ does not have this type either.
To mend this, we translate $\Rec$ to an operator by relying on the tupling functions:
Recall that we denote the $k$-ary polynomial-time computable string-tupling functions by $\langle\cdot,\ldots,\cdot\rangle$ and that they have polynomial-time computable projections $\pi_{1,k},\ldots \pi_{k,k}$.
The \demph{limited recursion operator} $\mathsf R$ is defined by
\[ \mathsf R(\psi) := \lambda \str a.\Rec(\lambda\str b\lambda \str c. \psi(\langle \str b, \str c\rangle),\pi_{1,3}(\str a),\pi_{2,3}(\str a),\pi_{3,3}(\str a)). \]
The right hand side is a type-two term from $\lambda(\pf\cup\{\Rec\})$ and thus $\mathsf R \in \twosec{\lambda(\pf\cup\{\Rec\})} = \p$.
Recall that the tupling functions for string functions were defined by $\langle\varphi,\ldots,\psi\rangle(\str a)=\langle\varphi(\str a),\ldots,\psi(\str a)\rangle$.
Thus, the tupling functions on string functions are also available as type-two terms.
The same is true for the projections.

More generally, any type-two functional can be replaced by an operator using lambda-abstraction and application and additionally tupling functions and projections.
This may be regarded as a normal form of functionals of type-two that is different from the usual one that fully curries.
This provides a simple way for us to transfer from the operator setting that was used in the previous sections to the functional setting that is more appropriate for the current chapter.
We may call a type-two functional strongly, resp.\ moderately polynomial-time computable if its operator normal form is contained in $\spt$, resp.\ $\mpt$.
We refrain from studying these classes of functionals directly in this paper and only mention  without going into details that they can alternatively be described directly by addapting the notion of an oracle machine to allow for more general and multiple oracles and also addapting the access restrictions.
Instead, we concentrate on $\lambda(\spt)$ and $\lambda(\mpt)$ directly.
 
\subsection[The lambda-closures of MPT and SPT]{The lambda-closures of $\mpt$ and $\spt$}

This section proves that $\twosec{\lambda(\mpt)}=\p=\twosec{\lambda(\spt)}$.
For the first equality, the strategy is very simple:
We prove that $\mathsf R \in\mpt$ and $\lambda(\mpt)_1 = \pf$.
For the second equality additional work has to be done since $\mathsf R \notin\spt$.

\begin{lemma}[$\mathsf R \in \mpt$]\label{resu:R is mpt}
    The limited recursion operator is moderately polynomial-time computable.
\end{lemma}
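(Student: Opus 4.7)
The plan is to build a single oracle machine $M$ computing $\mathsf R$ that opens with one preliminary \emph{probe} query deliberately larger than anything the recursion itself can produce, after which the obvious sequential evaluation of $\Rec$ proceeds without ever surpassing the probe in size. Writing the input as $\str a = \langle \str a_1, \str a_2, \str a_3 \rangle$ and, for a given oracle $\psi$, setting $r_0 := \str a_1$ together with $r_i := \psi(\langle \str a_3^{\le i}, r_{i-1}\rangle)^{\le |\str a_2|}$ for $1 \le i \le |\str a_3|$, the desired output is $\mathsf R(\psi)(\str a) = r_{|\str a_3|}$. The key structural point is that truncation forces $|r_i| \le |\str a_2|$ for every $i \ge 1$, so every query $q_i := \langle \str a_3^{\le i}, r_{i-1}\rangle$ appearing in the recursion has length at most $N := c_1 \cdot (|\str a_3| + \max(|\str a_1|, |\str a_2|)) + c_2$, where $c_1, c_2$ are the tuple-length constants from the preliminaries. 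Crucially, $N$ is a polynomial in $|\str a|$ alone and is independent of $\psi$.

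On input $\str a$, the machine $M$ first writes $\sdzero^N$ on its query tape and enters the oracle state once. This probe is the first oracle query, so it counts as one lookahead revision, but it raises the ``largest query so far'' threshold to $N$. Then $M$ executes the obvious loop: initialize $r := \str a_1$; for $i = 1, \ldots, |\str a_3|$, form $q_i = \langle \str a_3^{\le i}, r \rangle$, query the oracle, truncate the returned string to length $|\str a_2|$, and overwrite $r$; finally output $r$. Correctness follows by induction on $i$ directly from the definitions of $\Rec$ and $\mathsf R$. By the bound of the previous paragraph every $q_i$ satisfies $|q_i| \le N = |\sdzero^N|$, so no query after the probe exceeds the size of all previous queries, and $M$ performs exactly one lookahead revision on every oracle and every input, which is certainly finite.

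For the step-count, recall $|\str a| \le m_{\psi,\str a}$, so $N$ is polynomial in $m_{\psi,\str a}$ as well, and each oracle answer received during the run, in particular $\psi(\sdzero^N)$ and each $\psi(q_i)$, has length at most $m_{\psi,\str a}$ by definition. Writing the probe, forming each $q_i$, reading the oracle's answer, and performing the truncation each take time polynomial in $m_{\psi,\str a}$; with $|\str a_3| + 1 \le m_{\psi,\str a} + 1$ rounds in total the overall running time is bounded by a polynomial in $m_{\psi,\str a}$, so $M$ admits a polynomial step-count. The only real content of the proof is the $\psi$-independent polynomial bound $N$ on query sizes -- this is exactly what truncation in $\Rec$ buys us and what fails for repeated squaring in Example~\ref{ex:repeated squaring}; once that bound is in hand, the single probe absorbs all lookahead revisions and the rest is a routine time-accounting exercise.
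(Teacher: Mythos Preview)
Your proof is correct and follows essentially the same approach as the paper: issue a single probe query whose length dominates every query the recursion can produce (which works precisely because truncation keeps the intermediate values $r_i$ short and independent of $\psi$), then carry out the straightforward loop with no further lookahead revisions. The paper's probe is $\sdone^{\max\{|\langle \str a_3,\str a_1\rangle|,|\langle \str a_3,\str a_2\rangle|\}}$ rather than your $\sdzero^N$ built from the tupling constants, and the paper simply asserts a quadratic step-count where you spell the accounting out, but these are cosmetic differences.
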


\begin{proof}
    A high-level description of an oracle machine $M$ computing $\mathsf R$ can be given as follows:
    On inputs $\psi$ and $\str a$ fix the following notations: $\str t_0 := \pi_{1,3}(\str a)$, $\str c:=\pi_{2,3}(\str a)$ and $\str b:=\pi_{3,3}(\str a)$.
    The machine $M$ first queries $\psi$ at $\sdone^{\max\{\length{\langle \str c,\str t_0\rangle},\length{\langle \str c, \str b\rangle}\}}$.
    The return value is not used, but this query guarantees that $M$ has exactly one lookahead revision.
    Then, for $i=1,\dots, n=|\str c|$ set $\str t_i \leftarrow \psi(\langle \str c^{\le i}, \str t_{i-1}\rangle)^{\le |\str b|}$, and return $\str t_n$.
    Since $|\str t_i|\le|\str b|$ for $1 \le i \le n$, the initial query made by $M$ is the largest.
    Thus $M$ has a quadratic step-count and lookahead revision 1.
\end{proof}
This result does not rely on the reformulation of the recursion functional.
It can be checked that it is also true if Cook and Urquart's formulation is used.
However, the description of the machine becomes more involved.

The first main result of this section follows easily.
\begin{theorem}[$\lambda(\mpt)=\bff$]\label{resu:closure of mpt}
    The functionals represented by lambda terms with symbols for moderate polynomial-time computable operators are exactly the basic feasible functionals.
\end{theorem}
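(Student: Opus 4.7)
The plan is to establish the two inclusions separately.

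For the inclusion $\lambda(\mpt)\subseteq\bff$, my first step would be to invoke \Cref{resu:from step-count and lookahead revision number to running time} to obtain $\mpt\subseteq\p$ at the level of operators, then apply \Cref{thm:Kap Cook} to identify $\p$ with $\bff_2$. Every constant symbol for an $\mpt$-operator thus already denotes a functional in $\bff$. Since $\bff=\lambda(\pf\cup\{\Rec\})$ is by construction closed under simply-typed $\lambda$-abstraction and application, any term built from $\mpt$-constants will denote an element of $\bff$.

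For the reverse inclusion $\bff\subseteq\lambda(\mpt)$, I would show that both generators of $\bff$ are realisable inside $\lambda(\mpt)$. Each polynomial-time function $f\colon\S\to\S$ can be viewed as the constant-oracle operator $\bar f(\varphi)(\str a):=f(\str a)$; since $\bar f$ never interacts with its oracle it trivially has a polynomial step-count and zero lookahead revisions, hence $\bar f\in\mpt$. The function $f$ is then recovered as the $\lambda$-term $\lambda\str a.\bar f(\lambda\str b.\str b)(\str a)$ in $\lambda(\mpt)$. For the recursion functional the plan is to appeal to \Cref{resu:R is mpt}, which places the limited recursion operator $\mathsf R$ in $\mpt$, and then to use the tuplings and projections made available by the previous step. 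Unfolding the definition of $\mathsf R$ one verifies
\[
\Rec(\varphi,\str a,\str b,\str c)=\mathsf R\bigl(\lambda\str d.\varphi(\pi_{1,2}(\str d),\pi_{2,2}(\str d))\bigr)(\langle\str a,\str b,\str c\rangle),
\]
which, after an $\eta$-expansion, exhibits $\Rec$ as a $\lambda$-term built from $\mathsf R$, the tupling and the projections, all of which already belong to $\lambda(\mpt)$.

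The only genuinely delicate point I foresee is the currying/uncurrying needed to pass between the operator view of $\mathsf R$ and the functional view of $\Rec$. This is exactly what the definition of $\mathsf R$ anticipates by folding the tupling into its string argument, so the translation should be handled entirely by the polynomial-time tupling and projection apparatus inherited from $\pf$; no combinatorial work beyond the two cited lemmas and \Cref{thm:Kap Cook} should be required.
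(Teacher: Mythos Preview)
Your proposal is correct and follows essentially the same route as the paper: both directions use exactly the ingredients you name (Proposition~\ref{resu:from step-count and lookahead revision number to running time} and Theorem~\ref{thm:Kap Cook} for $\lambda(\mpt)\subseteq\bff$; the constant-oracle trick for $\pf$ and Lemma~\ref{resu:R is mpt} plus tupling for $\Rec$ in the other direction). The only point the paper spells out that you leave implicit is how to realise \emph{multi-argument} polynomial-time functions---in particular the tupling function $\langle\cdot,\ldots,\cdot\rangle$ itself---inside $\lambda(\mpt)$; the paper does this via the auxiliary $\mpt$-operator $T(\varphi)(\str a):=\langle\varphi(\epsilon),\str a\rangle$, but this is a routine detail and does not change the argument.
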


\begin{proof}
    Since $\bff = \lambda(\pf\cup\{\Rec\})$, to prove that $\lambda(\mpt)\supseteq\bff$ it suffices to specify lambda-terms in moderately polynomial-time computable operators that can be used to replace the symbols for any polynomial-time computable function and the symbol for the limited recursion functional.

    Note that for any polynomial-time computable function $\psi$ in one argument, the constant operator defined by $K_\psi(\varphi):=\psi$ is moderately polynomial-time computable.
    Thus the lambda-term $\lambda \str a. K_\psi(\lambda \str b.\str b)(\str a)$ that evaluates to the function $\psi$ may be used as replacement for a symbol for the polynomial-time computable function $\psi$.
    For multiple arguments note that the operator defined by $T(\varphi)(\str a) := \langle\varphi(\epsilon),\str a\rangle$ is moderately polynomial-time computable.
    For a $2$-ary polynomial-time computable function $\psi$ also the function $\tilde \psi(\langle\str a,\str b\rangle):= \psi(\str a,\str b)$ is polynomial-time computable and a symbol for $\psi$ may be replaced by the term
    \[ \lambda \str a\lambda \str b. K_{\tilde \psi}(\lambda \str c.\str c)(T(\lambda \str d. \str a)(\str b)). \]
    This generalizes to functions of arbitrary arity:
    For a $k$-ary $\psi$ use $$\tilde \psi(\langle \str a_1,\langle \ldots, \langle \str a_{k-1},\str a_k\rangle\ldots\rangle) := \psi(\str a_1,\ldots,\str a_k).$$

    Due to the moderate polynomial-time computability of the limited recursion operator $\mathsf R$ and the availability of tupling functions from the first part of the proof, the lambda-term
    \[ \lambda \varphi\lambda\str a\lambda \str b\lambda \str c.\mathsf{R}(\lambda\str d.\varphi(\pi_1(\str d),\pi_2(\str d)))(\langle \str a, \str b,\str c \rangle) \]
    may be used to replace the symbol $\Rec$.

    That the lambda-closure does not lead outside of $\bff$ follows from the inclusion of $\mpt$ in $\p$ that was provided in Proposition~\ref{resu:from step-count and lookahead revision number to running time} together with $\lambda(\p)\subseteq\bff$ which follows from Kapron and Cook's theorem that $\p = \bff_2$ (i.e.\ Theorem~\ref{thm:Kap Cook}).
\end{proof}

Unfortunately the same tactic  is bound to fail for the strongly polynomial-time computable operators:
an argument similar to that given for the maximization operator in Example~\ref{ex:not spt} shows that the limited recursion operator is not in \spt.
This forces us to attempt to split $\Rec$ into simpler parts.
Due to the concrete form of our limited recursion functional, this can fairly easily be done.
It should be noted though, that this is a special case of a more general theorem proved in the next section and that understanding the decomposition in the next lemma is not crucial for the understanding of the paper.

\begin{lemma}[$\Rec \in \twosec{\lambda(\spt)}$]\label{resu:R in sptcircspt}
   There exists a lambda-term with constants from $\spt$ that evaluates to the limited recursion functional $\Rec$.
\end{lemma}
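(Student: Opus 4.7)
The plan is to exhibit an explicit factorization of the form
$$\Rec(\varphi,\str a,\str b,\str c) \;=\; A\bigl(\lambda\str d.\,B(\varphi)(\langle\str d,\str b\rangle)\bigr)\bigl(\langle\str a,\str c,\str b\rangle\bigr),$$
where $A,B\in\spt$ are two concretely constructed operators. The inner operator will truncate $\varphi$ on demand, the outer will drive the iteration, and lambda-abstraction will glue them together so that the truncated values feed back into the next iterate.

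For the inner operator I take $B(\varphi)(\langle\str d,\str b\rangle) := \varphi(\str d)^{\le|\str b|}$. A machine computing $B$ issues one oracle query (at $\str d$) and then performs a polynomial-time truncation to $|\str b|$ bits, so it has a polynomial step-count and makes at most one length revision; hence $B\in\spt$. For the outer operator $A$, on input $\langle\str a,\str c,\str b\rangle$ the machine sets $\str t_0:=\str a$ and iteratively queries the oracle $\psi$ at $\langle\str c^{\le i},\str t_{i-1}\rangle$, using the answer as $\str t_i$, for $i=1,\dots,|\str c|$, and returns $\str t_{|\str c|}$. On its own this would allow an adversarial $\psi$ to trigger up to $|\str c|$ length revisions, so I equip $A$ with an \emph{abort mechanism}: $A$ maintains the maximum answer size seen so far and, as soon as an answer is observed whose length strictly exceeds the maximum of the input length and all previous answer lengths, $A$ halts and returns $\epsilon$. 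This uniformly bounds length revisions by $1$ and leaves the step-count polynomial, so $A\in\spt$.

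Finally I verify that the displayed lambda term evaluates to $\Rec$: when the oracle of $A$ is instantiated as $\str d\mapsto B(\varphi)(\langle\str d,\str b\rangle)=\varphi(\str d)^{\le|\str b|}$, every answer $A$ sees has length at most $|\str b|$, which is bounded by the length of the input $\langle\str a,\str c,\str b\rangle$; so no length revision ever occurs, $A$'s abort never fires, and a short induction on $i$ shows that $\str t_i$ coincides with the $i$-th stage of the defining recursion, yielding $\str t_{|\str c|}=\Rec(\varphi,\str a,\str b,\str c)$. The only real obstacle is establishing $A\in\spt$ notwithstanding its $|\str c|$-fold iteration, and the abort mechanism resolves this by trading away correctness on adversarial oracles---which the lambda term never actually supplies---for the uniform length-revision bound that $\spt$ demands.
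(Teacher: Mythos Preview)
Your proposal is correct and follows essentially the same approach as the paper: an inner operator that applies $\varphi$ and truncates to $|\str b|$, and an outer operator that drives the iteration and carries an abort clause to cap length revisions. The paper's outer functional $\mathcal T$ aborts when an answer exceeds $|\str b|$ rather than when a length revision occurs, and it works with multi-oracle machines to avoid explicit tupling, but these are cosmetic variations on the same decomposition.
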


\begin{proof}
For convenience we use oracle machines with multiple oracles and the corresponding generalisation of $\spt$.
Functionals $\mathcal S,\mathcal T\in\spt$ such that for all
$\varphi,\str a, \str b, \str c$,
\begin{equation}\label{eq:R decomp}
\Rec(\varphi, \str a, \str b, \str c)=\mathcal{T}(\lambda \str d \lambda \str t.\mathcal{S}(\varphi, \str t, \str b, \str d), \str a, \str b, \str c)
\end{equation}
can be obtained by considering oracle machines $M$ and $N$ that work as follows:
$N$ computes the function $\mathcal{S}(\varphi, \str t, \str b, \str d) = \varphi(\str d, \str t)^{\le |\str b|}$ in a straightforward way.
The following is a high-level description of an oracle machine $M$ computing $\mathcal{T}$:
on inputs $\psi, \str a, \str b, \str c$, letting $\str t_0 = \str a$, for $i=1,\dots, n=|\str c|$, it sets $\str s \leftarrow \psi(\str c^{\le i}, \str t_{i-1})$, and if $|\str s| \le |\str b|$ set $\str t_i \leftarrow \str s$, otherwise halt and return $\epsilon$.
If all $n$ steps complete, it returns $\str t_n$.
Thus $M$ computes $\mathcal{T}$ has a quadratic step-count and length revision 1.
Moreover, $\mathcal S$ and $\mathcal T$ satisfy \eqref{eq:R decomp}.
\end{proof}
The following theorem can now be proven completely analogous to the same statement for $\mpt$.

\begin{theorem}[$\lambda(\spt) = \bff$]\label{resu:closure of spt}
    The functionals represented by lambda terms with symbols for strongly polynomial-time computable operators are exactly the basic feasible functionals.
\end{theorem}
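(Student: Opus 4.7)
The plan is to mirror the proof of Theorem~\ref{resu:closure of mpt} verbatim, with a single modification: wherever that proof appealed to the fact that the limited recursion operator $\mathsf{R}$ belongs to $\mpt$, I would instead appeal to Lemma~\ref{resu:R in sptcircspt}, which supplies a factorization of $\Rec$ through two $\spt$ operators.

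For the inclusion $\lambda(\spt)\subseteq\bff$, I would chain the inclusions $\spt\subseteq\mpt$ (Proposition~\ref{prop: spt contains optfilr}) and $\mpt\subseteq\p$ (Proposition~\ref{resu:from step-count and lookahead revision number to running time}), yielding $\lambda(\spt)\subseteq\lambda(\p)$, and then apply Theorem~\ref{thm:Kap Cook} exactly as in the $\mpt$ case to conclude that every term in $\lambda(\p)$ represents a basic feasible functional.

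For the reverse inclusion I would use that $\bff=\lambda(\pf\cup\{\Rec\})$ and supply replacements for both sorts of constants in terms of $\spt$-operators. The constant operator $K_\psi(\varphi):=\psi$ issues no oracle queries and so has length revision $0$ and a polynomial step-count; hence $K_\psi\in\spt$ whenever $\psi\in\pf$, and the wrapping lambda-term from the $\mpt$ proof can be reused verbatim. The tupling operator $T(\varphi)(\str a):=\langle\varphi(\e),\str a\rangle$ makes a single oracle query, so it also has finite length revision and a polynomial step-count, and the same multi-argument encoding via $\tilde\psi(\langle\str a_1,\ldots,\str a_k\rangle):=\psi(\str a_1,\ldots,\str a_k)$ then takes care of polynomial-time functions of arbitrary arity.

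To replace the symbol $\Rec$, I would invoke Lemma~\ref{resu:R in sptcircspt}, which provides $\mathcal{S},\mathcal{T}\in\spt$ (in the multi-oracle sense, equivalent to the single-oracle sense by the tupling-and-projection argument of Section~\ref{sec:limited recursion as an operator}) satisfying $\Rec(\varphi,\str a,\str b,\str c)=\mathcal{T}(\lambda\str d\lambda\str t.\mathcal{S}(\varphi,\str t,\str b,\str d),\str a,\str b,\str c)$. Packaging $\mathcal{S}$ and $\mathcal{T}$ as single-oracle operators via the polynomial-time tupling and projection functions and then abstracting over $\varphi,\str a,\str b,\str c$ yields a lambda-term in $\lambda(\spt)$ denoting $\Rec$, completing the inclusion $\bff\subseteq\lambda(\spt)$. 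I expect the only point requiring care to be the bookkeeping that converts the multi-oracle/multi-argument forms of $\mathcal{S},\mathcal{T}$ into honest single-oracle $\spt$ operators without losing the finite length-revision or polynomial step-count certificates, but this is exactly the routine encoding already normalized in Section~\ref{sec:limited recursion as an operator}.
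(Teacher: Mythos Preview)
Your proposal is correct and follows exactly the paper's own approach: the paper's proof consists of the single sentence that the only adjustment to the proof of Theorem~\ref{resu:closure of mpt} is to replace the $\Rec$ term by the $\spt$-term supplied by Lemma~\ref{resu:R in sptcircspt}. Your additional observation that $K_\psi$ and $T$ already lie in $\spt$ (not merely $\mpt$) is the implicit check needed for that one-line proof to go through, and your handling of the inclusion $\lambda(\spt)\subseteq\bff$ via $\spt\subseteq\mpt\subseteq\p$ is equivalent to the paper's.
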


\begin{proof}
    The only adjustment that has to be made is to change the lambda term replacing $\Rec$ to the term over $\spt$ that exists according to Lemma~\ref{resu:R in sptcircspt}.
\end{proof}

As we are mainly interested in the two section $\p$ of the basic feasible functionals we gather these special cases from the Theorems~\ref{resu:closure of mpt} and Theorem~\ref{resu:closure of spt} and state them as the main results of this paper.

\begin{theorem}[$\twosec{\lambda(\mpt)}=\p=\twosec{\lambda(\spt)}$]\label{th: main}
The closure of $\spt$ and $\mpt$ under lambda-abstraction and application are exactly the polynomial-time computable functionals.
\end{theorem}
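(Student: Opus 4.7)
The plan is to observe that this theorem is essentially a direct corollary assembly of results already in place and requires no further construction. The main technical work has been done in Theorem~\ref{resu:closure of mpt} and Theorem~\ref{resu:closure of spt}, which establish the stronger equalities $\lambda(\mpt)=\bff$ and $\lambda(\spt)=\bff$ at the level of full finite-type lambda closures. What remains is to project these equalities down to the two-section and invoke the Kapron--Cook theorem to identify $\bff_2$ with $\p$.

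Concretely, I would first recall Theorem~\ref{resu:closure of mpt}, which gives $\lambda(\mpt)=\bff$. Taking two-sections on both sides (a purely syntactic operation that picks out the type-two terms) yields
\[
\twosec{\lambda(\mpt)} = \twosec{\bff} = \bff_2.
\]
Next I would apply Theorem~\ref{thm:Kap Cook}, which identifies $\bff_2$ with $\p$, to obtain $\twosec{\lambda(\mpt)}=\p$. The same two-step argument applied to Theorem~\ref{resu:closure of spt} yields $\twosec{\lambda(\spt)}=\bff_2=\p$. Chaining these equalities gives the claimed triple identity.

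There is no real obstacle left at this stage: the hard parts have already been handled. Specifically, the nontrivial inclusion $\bff\subseteq\lambda(\mpt)$ was addressed through the moderate polynomial-time computability of the limited recursion operator $\mathsf R$ in Lemma~\ref{resu:R is mpt}, while the analogous inclusion for $\spt$ was addressed through the two-operator factorization of $\Rec$ in Lemma~\ref{resu:R in sptcircspt}. The reverse inclusions use $\mpt\subseteq \p$ (Proposition~\ref{resu:from step-count and lookahead revision number to running time}) and $\spt\subseteq\mpt\subseteq\p$ (Proposition~\ref{prop: spt contains optfilr}). So the proof of the main theorem itself reduces to a one-line citation of three earlier results; the substantive content of the paper is concentrated in those prior results rather than in this final corollary.
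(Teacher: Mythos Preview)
Your proposal is correct and matches the paper's own treatment: the paper explicitly presents Theorem~\ref{th: main} as gathering the type-two special cases of Theorems~\ref{resu:closure of mpt} and~\ref{resu:closure of spt}, with the identification $\bff_2=\p$ coming from Theorem~\ref{thm:Kap Cook}. Your account is, if anything, slightly more explicit than the paper's one-sentence justification.
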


\subsection{Some results about composition}\label{sec:some results about composition}

The decomposition of the limited recursion functional into a lambda-term over two strongly polynomial-time computable functionals from Lemma~\ref{resu:R in sptcircspt} is a special case of a general phenomenon.
For the statement of the corresponding result we switch back to the operator setting and provide a decomposition of any moderately polynomial-time computable operator into a composition of two strongly polynomial-time computable operators.
Note that composition of operators may be expressed as a lambda term: $S\circ T = \lambda\varphi.(S\circ T)(\varphi) = \lambda\varphi.S(T(\varphi))$.
Thus, Lemma~\ref{resu:R in sptcircspt} can be recovered by decomposing the limited recursion operator and then using that the limited recursion functional can be expressed as a lambda term over the limited recursion operator.

It is instructive to revisit the maximization operator from Example~\ref{ex:not spt} with this in mind.
\begin{example}[Revisiting Example~\ref{ex:not spt}]\label{ex:decompose}
  Recall that Example~\ref{ex:not spt} concluded that the operator $F(\varphi)(\str a) := \sdone^{\max_{\str b\subseteq\str a}\length{\varphi(\str a)}}$ is not contained in $\spt$.
  It is fairly straightforward to write this operator as a composition of two operators from $\spt$:
  Let $G(\varphi)$ be the function that interprets its string input as a finite list of strings $\str a_1,\ldots,\str a_n$ and returns $\sdzero^{n-k}\sdone^k$ if $k< n$ is the least index such that $\length{\varphi(\str a_k)} > \length{\varphi(\str a_{k+1})}$ and $\sdone^{\max_{k < n}\{\length{\varphi(\str a_k)}\}}$ otherwise.
  It is straightforward to implement $G$ by an oracle machine that proves it to be contained in $\spt$.
  Next let $M$ be the machine that on oracle $\varphi$ and input $\str a$ acts as follows:
  It first produces the list $L$ of all initial segments of $\str a$ and queries its oracle on this list.
  If the return value is neither of the form $\sdzero^{\length{\str a} -k}\sdone^k$ for some $0<k \leq \length{\str a}$ nor $\sdone^m$ it returns $\epsilon$.
  If the return value is of the form $\sdzero^{\length{\str a} -k}\sdone^k$ it swaps the $k$-th and the $k+1$-st entry of the list and queries the oracle on the new list.
  If this is the case more than $\length{\str a}^2$ times the machine terminates and returns $\epsilon$.
  If the return value is $\sdone^m$ it returns $\sdone^m$.
  From this definition is is directly apparent that $M$ always terminates and computes some total operator $H$ that $M$ proves to be from $\spt$.
  It should be clear that $H \circ G = F$: $G$ checks if the initial segments are in an order such that the return values are increasing and provides $H$ with the necessary information for reordering in the case that this is not true.
  
  Alternatively one can stay closer to Lemma~\ref{resu:R in sptcircspt} and note that $F(\varphi)(\str a)=\length{\varphi(F'(\varphi,\str a))}$,
where $F'(\varphi,\str a)=\argmax_{\str a' \subseteq \str a}\varphi(\str a')$.
This $F'$ may be defined as
\[
F'(\varphi, \str a) = \mathcal{T}(\lambda \str d\lambda \str t.M(\varphi,\str d,\str t),\epsilon,\str a,\str a)
\]
where $\mathcal{T} \in \spt$ is defined as in the proof of Lemma~\ref{resu:R in sptcircspt} and
\[
M(\psi, \str s, \str t)=\left\{\begin{array}{ll}
                            \str s & \text{if $\psi(\str s) > \psi(\str t)$;}\\
                            \str t & \text{otherwise.}
                        \end{array}\right.
\]
Since $M$ makes only two queries it is immediately seen to be in $\spt$.
\end{example}

The decomposition from Lemma~\ref{resu:R in sptcircspt} is a special case of the following theorem.

        \begin{theorem}[$\mpt\subseteq \spt\circ\spt$]\label{resu:factorization}
            Any moderate polynom\-ial-time computable operator can be written as composition of two strongy polynomial-time computable operators.
        \end{theorem}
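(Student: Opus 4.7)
The plan is to leverage the phase structure of MPT machines: an MPT machine $M$ with polynomial step-count $p$ and lookahead revision bound $r$ has its execution naturally partitioned into at most $r+1$ phases, where each phase consists of all oracle queries posed between two consecutive lookahead revisions. I will construct an outer $\spt$ operator $F$ that iterates through these phases using exactly $r+1$ oracle queries, and an inner $\spt$ operator $G$ that resolves each such query by simulating $M$'s execution within a single phase.

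Concretely, I would define $F(\psi)(\str a)$ to maintain an encoding $\sigma_i$ of $M$'s configuration at the start of phase $i$ (with $\sigma_0$ the initial configuration on $\str a$), and for $i=0,\ldots,r$ successively query $\psi$ at $\langle i,\sigma_i\rangle$ to obtain $\sigma_{i+1}$, finally extracting $M$'s output from $\sigma_{r+1}$. The inner operator $G(\varphi)(\langle i,\sigma\rangle)$ resumes the simulation of $M$ from $\sigma$ with oracle $\varphi$, advancing until either $M$ halts or the next lookahead revision would be triggered, and returns the resulting configuration. By construction $F(G(\varphi))(\str a)=M^\varphi(\str a)$.

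Verifying $F\in\spt$ is immediate: $F$ makes exactly $r+1$ oracle queries with at most $r+1$ distinct answer sizes, so length revision is bounded by $r+1$, and the remaining bookkeeping is polynomial in the input and answer sizes. The main obstacle is showing $G\in\spt$. For the polynomial step-count, note that the input $\langle i,\sigma\rangle$ encodes $M$'s tape content at the start of phase $i$, so $|\text{input to } G|$ already dominates $|\str a|$ and all oracle answers seen by $M$ in previous phases; combined with $M$'s own step-count $p$ restricted to phases up to $i$ this yields a polynomial step-count for $G$ relative to $m_{\varphi,\text{input}}$.

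Bounding the length revisions of $G$ within a single phase is the delicate point, since a priori the $\varphi$-answers received during one phase of $M$ may grow an unbounded number of times. The resolution is to modify the simulation inside $G$ so that $\varphi$-queries are scheduled in an order controlling length revisions, exploiting the fact that within phase $i$ all queries have size bounded by a value $Q_i$ determined by the triggering lookahead revision. The idea is that after $G$ issues an initial probe of maximal in-phase size, the remaining queries can be organised so that only a constant number of fresh maxima in the received answer sizes occur, matching $F$'s exception-handling view: $F$ executes the original machine and throws an exception at each phase boundary, while $G$'s form depends only on $p,r$ and the local structure of $M$. Turning this scheduling idea into a precise construction is the technical heart of the proof; the remainder reduces to routine accounting of polynomial step-counts and verifying that $F\circ G$ realises the original operator.
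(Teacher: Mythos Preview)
Your proposal has a genuine gap at precisely the point you flag as ``the delicate point.'' Within a single phase of $M$ (between consecutive lookahead revisions) all queries have size bounded by some $Q_i$, but the number of \emph{length} revisions---the number of times a fresh maximum among oracle-answer sizes occurs---is not bounded by any constant depending only on $p$ and $r$. The maximization operator of Example~\ref{ex:not spt} already witnesses this: it has lookahead revision $1$, yet for every $k$ there is an oracle and input forcing $k$ length revisions inside its single phase. Your suggested fix, re-scheduling the $\varphi$-queries so that only constantly many fresh answer-size maxima occur, cannot work: the queries $M$ poses within a phase depend on earlier answers and are therefore not known in advance, and an initial probe of maximal in-phase query size is useless because the probe's answer may be short while later answers are long. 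Hence your $G$ is not in $\spt$, and the decomposition collapses.

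The paper's construction reverses the roles of ``what the inner machine aborts on'' and ``what bounds the outer machine.'' The inner machine $\tilde M$ simulates $M$ but aborts at the first \emph{length} revision encountered during the simulation (after an initial probing query supplied by the outer machine); this makes $\tilde M$'s length-revision count bounded by construction, with no analysis needed. The outer machine $N$ catches each abort and re-invokes $\tilde M$, passing along the query $\str d$ at which the abort occurred so that $\tilde M$ can probe $\str d$ first next time and push past that point. The nontrivial step is bounding the length revisions of $N$: the abort messages are padded so that their length is fixed and controlled by $N$ unless the aborting query $\str d$ exceeds the current padding budget, and each such over-budget event corresponds to a fresh lookahead revision of the original $M$, hence happens at most $r$ times. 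Thus $M$'s lookahead-revision bound is used to control the \emph{outer} machine's length revisions, while the inner machine's bound is trivial---exactly the opposite of your allocation.
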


        \begin{proof}
            Let $M$ be a machine that proves that an operator is from $\mpt$.
            The core ideas behind the decomposition are all already present in case that was discussed in Example~\ref{ex:decompose}: First modify $M$ to a machine $\tilde M$ that first does some preparation, then carries out the computation that $M$ does but aborts with an error message whenever it runs into a proper length revision.
            Here, \demph{proper} means that the length revision does not happen on the first oracle query.
            Of course, this machine does not compute the same operator as $M$ as the computation may be aborted.
            To fix this precompose another machine $N$ that catches the error message, and calls $\tilde M$ again, providing it with information from the error message that makes it possible for the preparation process to avoid the simulation from running into the same problem again.

            To give a detailed description of what the factors $\tilde M$ and $N$ do, we use an additional separator symbol $\#$.
            The availability of such a symbol can easily be simulated with any alphabet containing at least two symblos $\sdzero$ and $\sdone$ by replacing each occurence of $\sdzero$ by $\sdzero\sdzero$, each occurence of $\sdone$ by $\sdzero\sdone$ and $\#$ with $\sdone\sdone$.

            The machine $\tilde M$ works as follows:
            With oracle $\varphi$ and on input $\langle \str a, \str b\rangle$ it first checks if the string $\str b$ is of the form $\str c \# \str c'$.
            If it is not it returns the empty string.
            If it is, then it asks the query $\str c$.
            The answer to that first query is ignored and instead the steps $M$ does on input $\str a$ and with oracle $\varphi$ are carried out while keeping track of the size $m$ of the biggest oracle answer.
            In case a length revision is encountered on oracle query $\str d$, the machine checks if $\str d$ is shorter than the string $\str c'$.
            If so, $\tilde M$ returns $\str d \# \sdone^{\length{\str c'} - \length {\str d}}$.
            If not, it returns $\str d \#\# \sdone^m$.

            To describe how the machine $N$ works, let $p$ be a polynomial step-count of the original machine and $r$ its number of lookahead revisions.
            With oracle $\psi$ and on input $\str a$ the machine $N$ behaves as follows:
            It evaluates $m:=p(\length{\str a})$ and poses the query $\langle \str a,\#\sdone^m\rangle$ to the oracle.
            If the answer contains a $\#$ and triggers a length revision, it returns the empty string.
            If the answer contains a single $\#$ and does not trigger a length revision, it copies the string $\str d$ of digits that occur before the first $\#$ and poses as next query $\langle \str a,\str d\#\sdone^m\rangle$.
            If the answer contains a double $\#$, it checks whether the length $k$ of the string after the double $\#$ is bigger than $\length{\str a}$ and if so replaces $m$ by $p(k)$.
            If it runs into the $2r+1$-st length revision, the machine $N$ returns the empty string.

            By definition the machines $\tilde M$ and $N$ both have finite length revision.
            It can be checked that they also have polynomial step-counts.
            To see that the composition of the two machines computes the same operator as $M$, it is sufficient to argue that the only time a length revision may happen for the machine $N$ is on an oracle query where the execution of the machine $\tilde M$ runs across a lookahead revision in $M$ that it has not come across before and on the query directly following such a query.
            Indeed: If a length revision happens for $N$ this means that either $\tilde M$ has run into the situation that $M$ attempted to ask an oracle query that was bigger than anything that could have been asked at the point where the last lookahead revision has happened (i.e.\ the last time a length revision happened for $N$).
            Or the machine $N$ has just updated the standard size for the return values $\tilde M$ is supposed to give.
        \end{proof}
To illustrate how the general decomposition proceeds, we informally describe its effect on the limited recursion functional $\Rec$ from Section~\ref{sec:limited recursion as an operator}.
In this case we have
\[
        \Rec(\varphi,\str a, \str b, \str c)=\mathcal{P}(\lambda \str d\lambda \str t.\mathcal{Q}(\varphi, \str t, \str b, \str c, \str d), \str a, \str b, \str c),
\]
where $\mathcal Q$ acts like a re-entrant version of $\mathcal{R}$ which may be started at an arbitrary point in the recursion, and raises an exception whenever it encounters more than one length revision.
$\mathcal{P}$ acts as an exception handler that restarts $\mathcal Q$ in case an exception is thrown.
The following is a high-level description of an oracle machine $M$ computing $\mathcal Q$.
On inputs $\varphi, \str t, \str b, \str c, \str d$, first check that $\str d \subseteq \str c$.
If not, return an ``abort'' value, say $\epsilon$.
Otherwise, letting
$\str t_0 = \str t$,
$M$ does the following for $i=1,\dots, n=|\str c|-|\str d|$:
\begin{enumerate}
\item Set $\str c_i \leftarrow \str c^{\le (i+|\str d|)}$, $\str s_i \leftarrow \varphi(\str c_i, \str t_{i-1})$ and  $\str t_i \leftarrow \str s_i^{\le |\str b|}$
\item If $i > 1$ and $|\str s_i| > |\str s_{i-1}|$ return $\str c_i$ and $\str t_i$ encoded as one string whose length is only dependent on $\length{\str b}$.
\end{enumerate}
If all steps execute, return $\str t_n$ marked as return value.
In this case it is simple to guarantee that all exception messages have the same length and that this makes it possible for the exception handler $\mathcal P$ to avoid length revisions.
Now an OTM $N$ computing, $\mathcal P$, on inputs $\psi, \str a, \str b, \str c$, repeatedly calls $\psi$, starting with inputs $\str \epsilon, \str a$.
If any answer is marked as return value then $N$ returns this answer.
If the answer is an exception message, $N$ decodes the exception and feeds it back to $\psi$.
If none of this, more than two length revisions occur or more than $\length{\str c}$ iterations pass, then $N$ aborts and returns $\epsilon$.


Methods similar to those employed in the proof of Theorem~\ref{resu:factorization} can be used to prove that in special cases composition does not lead outside of $\mpt$.

        \begin{lemma}\label{resu:composition}
            Let $F, G\in \mpt$ be such that there exists a machine proving $G\in \mpt$ that does only one lookahead revision.
            Then $F\circ G\in\mpt$.
        \end{lemma}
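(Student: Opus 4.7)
The plan is to refine the standard composition machine of Theorem~\ref{resu: closure under composition} by adding a preparatory query together with an abort-and-restart mechanism in the spirit of Theorem~\ref{resu:factorization}.  Let $M_F$ witness $F\in\mpt$ with polynomial step-count $p_F$ and exactly one lookahead revision, and let $M_G$ witness $G\in\mpt$ with polynomial step-count $p_G$ and $r_G$ lookahead revisions.  The crucial consequence of the hypothesis on $M_F$ is that every query $M_F$ ever makes has size at most $s^*:=p_F(\length{\str a})$---the single lookahead revision is necessarily the first query, which is determined by $\str a$ alone and is therefore the largest.  Thus every invocation of $M_G$ that our composed machine will trigger is on an input of size at most $s^*$, and all such invocations share a common tiered size hierarchy $T_0<T_1<\cdots<T_{r_G}$ defined by $T_0:=p_G(s^*)$ and $T_{j+1}:=p_G(\length{\varphi}(T_j))$.

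Construction of $K$: on input $\str a$ with oracle $\varphi$, $K$ first computes $s^*$ by a pure simulation of $M_F$'s opening segment (no oracle calls needed), initialises a budget $B:=T_0$, and poses a single warm-up query to $\varphi$ of size $B$---$K$'s first lookahead revision.  It then runs the standard composed simulation: it executes $M_F$ on $\str a$ and, for each query $\str b$ that $M_F$ directs to $G(\varphi)$, launches a fresh simulation of $M_G$ on $\str b$, forwarding each $M_G$-query to $\varphi$.  Queries of size $\le B$ incur no new maximum for $K$; but if $M_G$ ever attempts a query of size $q>B$, $K$ aborts the entire composed simulation, updates $B\leftarrow q$, poses a preparatory query of the new budget size (incurring one further lookahead revision), and restarts $M_F$ from scratch.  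Determinism of $M_F$ and $M_G$ ensures the replay re-asks identical queries, all of sizes $\le B$, so these replayed queries contribute no new maxima before the computation progresses beyond the previous abort.

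Analysis: an update at budget value $q$ happens only when $M_G$ in some run attempts a query with $q>B_{\text{old}}$; all that run's in-run predecessors had size $\le B_{\text{old}}$, so the maximum oracle answer seen in the run so far is at most $\length{\varphi}(B_{\text{old}})$, whence $q\le p_G(\length{\varphi}(B_{\text{old}}))$.  Inductively the successive budget values satisfy $B_j\le T_j$, and a tier-level count analogous to the one in Proposition~\ref{resu:from step-count and lookahead revision number to running time} bounds the total number of new maxima that $K$ can experience by a constant depending only on $r_G$, yielding finite lookahead revision for $K$.  A polynomial step-count then follows from composing $p_F$ and $p_G$ together with the constant bound on the number of replays.

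The main obstacle is exactly this revision-counting argument.  Without the restart discipline, independent invocations of $M_G$ on different same-sized inputs $\str b_i$ can each introduce fresh new-max queries to $\varphi$, and the lookahead-revision number of the naive composed machine scales with the number of $M_F$-queries---polynomial in $\length{\str a}$ rather than a constant.  The preparatory-query-plus-restart device forces every new maximum in $K$ to lie on the strictly-increasing common tier sequence $T_0<T_1<\cdots<T_{r_G}$, and the one-lookahead-revision hypothesis on $M_F$ is used precisely to pin the starting tier $T_0=p_G(s^*)$ to a single $\str a$-dependent bound that governs the input size of every $M_G$-subcall uniformly---which is what makes the tier recursion terminate.
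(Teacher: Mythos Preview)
There is a genuine gap in your revision-counting step, and the construction as described does \emph{not} have finite lookahead revision.

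Take the very operators used in the paper's subsequent example: $F(\varphi)(\str a)=\sdzero^{\max\{|\varphi(\sdzero^n)|:n\le|\str a|\}}$ (one lookahead revision; queries $\sdzero^{|\str a|},\sdzero^{|\str a|-1},\dots$) and $G(\varphi)(\str b)=\varphi(\varphi(\str b))$ (two lookahead revisions). Choose $\varphi$ with $|\varphi(\sdzero^n)|=B_0+(|\str a|-n)+1$ and $\varphi\equiv\e$ elsewhere. In your machine $K$, the simulation of $M_G$ on $\sdzero^{|\str a|}$ asks a second query of size $B_0+1>B_0$; abort, $B_1=B_0+1$. After restart, the run on $\sdzero^{|\str a|}$ now fits under $B_1$, but the run on $\sdzero^{|\str a|-1}$ asks a second query of size $B_0+2>B_1$; abort, $B_2=B_0+2$. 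This continues for $|\str a|$ rounds. Your bound $B_j\le T_j$ is correct but irrelevant for counting updates: all of $B_1,\dots,B_{|\str a|}$ lie below $T_1$, because $|\varphi|(T_0)\ge B_0+|\str a|+1$ makes $T_1$ huge. The tier recursion bounds the \emph{size} of the $j$-th budget, not the \emph{number} of budgets. The one-revision hypothesis on $M_F$ pins $s^*$, but it does nothing to stop the many independent $M_G$-subcalls from each contributing a fresh abort at a slightly larger size.

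The paper's proof is organised differently and, crucially, uses the single-lookahead-revision hypothesis on the \emph{inner} machine $N$, not the outer one. There is no restart. The outer machine $M'$ tags each query $\str b_i$ with a size budget (initially $p(|\str a|)$, updated to $p(m)$ with $m$ the largest oracle answer seen whenever $|\str b_i|$ exceeds the current budget), and the inner $N'$ on input $\langle\str b_i,m\rangle$ first poses a warm-up $\sdone^{q(m)}$ and then runs $N$ on $\str b_i$. Because $N$ has one lookahead revision, \emph{every} query of $N$ on $\str b_i$ is bounded by its first, hence by $q(|\str b_i|)\le q(m)$, hence absorbed by the warm-up. So a lookahead revision in $M'^{N'}$ can occur only when the tag $m$ changes, and that happens at most as many times as $M$ has lookahead revisions. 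You read the statement as placing the one-revision constraint on the outer operator; the paper's argument in fact needs it on the inner one, and your attempt shows why the other reading does not go through with a simple budget-and-restart scheme.
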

        \begin{proof}
            Let $M$ and $N$ be oracle machines that run with finite lookahead revisions and have polynomial step-counts $p$ resp.\ $q$ and such that $N$ only makes one lookahead revision.
            Modify the machines $M$ and $N$ to machines $M'$ and $N'$ as follows:
            When given an input $\str a$, $M'$ first computes $p(\length{\str a})$ and then carries out the operations that $M$ carries out while checking for each oracle query $\str b$ that $M$ issues whether the length is bounded by $p(\length{\str a})$.
            If this is the case it modifies the query to be $\langle \str b,p(\length{\str a})\rangle$ and continues with the computation of $M$ while remembering the size of the biggest oracle answer that turns up in this computation.
            Otherwise it computes $p(m)$ where $m$ is the maximum of $\length{\str a}$ and the  size of the biggest oracle answer it has witnessed so far and continues with the above procedure where $p(\length{\str a})$ is replaced by $p(m)$.
            Since $p$ is a step-count for $M$, it must be the case that the length of the query is bounded by $p(m)$, so that the simulation may proceed at this point.
            Once the machine $M$ terminates, $M'$ returns its return value.

            The machine $N'$ on the other hand works as follows:
            It interprets the input as a pair $\langle \str b,m\rangle$.
            If this is not possible, it aborts and returns the empty string.
            If it is possible, it first computes $q(m)$ and issues the query $\sdone^{q(m)}$ and then carries out the steps that $N$ would have carried out on input $\str b$.
            Once $N$ terminates, it hands the return value on.

            It can be checked that the machine $M'^{N'}$ that arises by replacing each oracle call in $M'$ with a call to a subroutine that works like $N'$ computes the same function as $M^N$ does.
            To see that $M'^{N'}$ runs with finite lookahead revision first note that it is guaranteed that all the $\str b$ that are handed to $N'$ are in length smaller than the value of the second argument.
            Thus, $N'$ will not do a new lookahead revision unless the second value of the pairs it is given changes.
            That the second value only changes a finite number of times is guaranteed by the finite lookahead revision of $M$.

            That the machine $M'^{N'}$ has a polynomial step-count follows by checking that each of $M'$ and $N'$ have polynomial step counts.
            Thus the composition of the machines runs in polynomial time and thereby has a polynomial step-count.
        \end{proof}
The limited-recursion operator and $C_0$ are included in the restricted class of operators from the previous lemma.

The proofs of \Cref{resu:factorization} and \Cref{resu:composition} proceeded by modifying given machines.
Indeed, the closure property from \Cref{resu:composition} breaks down on a machine level:
\begin{example}[Composition of machines]
    Consider the operator
    \[ F:\B\to\B,\quad F(\varphi)(\str a):= \sdzero^{\max\{\length{\varphi(\sdzero^n)}\mid n\leq \length{\str a}\}}. \]
    This operator can be computed by the machine $M$ that acts as follows when given an oracle $\varphi$ and input $\str a$:
    It queries the oracle $\length{\str a}$ times, namely at $\varphi(\sdzero^i)$ where $i$ decreases from $\length{\str a}$ to zero.
    Each time after issuing a query it saves the maximum of the length of the return value and the current content of the first memory tape to the first memory tape.
    Obviously this machine has a polynomial step-count and only does one lookahead revision.

    Define another operator $G$ by
    \[ G(\varphi)(\str a):= \varphi(\varphi(\str a)). \]
    This operator can be computed in the straightforward way by a machine $N$ with lookahead revision two.

    The composition $F\circ G$ is computed by the machine $M^N$ that can be obtained by replacing each oracle call of $M$ with a subroutine that imitates what $N$ does.
    However, this machine is easily seen to not have finite lookahead revision:
    For any $N$, executing the machine $M^N$ on input $\sdzero^k$ and any oracle such that for all $n\leq k$ it holds that $\varphi(\sdzero^n) = \sdzero^{2k-n}$ results in $N$ lookahead revisions.
\end{example}
The standard proof that polynomial-time computable operators are closed under composition (i.e.\ the proof of \Cref{resu: closure under composition}), by contrast, works on a machine level:
To prove polynomial-time computability of the composition of two polynomial-time computable operators, a machine is constructed from fast machines for the two operators in the straightforward way, i.e. by replacing the oracle calls in the program of the first machine by copies of the program of the second machine.
Finally note that the proofs that $\mpt$ and $\spt$ are included in $\p$ proceed by proving the very machines that witness that an operator is contained in $\mpt$ or $\spt$ to run in polynoimal time.
Thus, feasibility of an operator can still be proven by decomposing it into moderately or strongly polynomial-time computable components and combining the corresponding machines in the straightforward way.

\section{Conclusions and future work}
We have given two characterisations of feasible type-two computation which coincide with the familiar notion of polynomial-time, but have a simple and appealing syntactic description.
They use POTMs with simple restrictions on how oracles may be accessed as building blocks.
Such machines may call other POTMs as subroutines, but as long as all the machines obey the query restriction, the result is polynomial-time.
Although we do not consider this the main contribution -- the evidence is overwhelming already -- this further supports the naturalness of the notion of feasibility in second-order complexity theory:
both of these models, formulated without any notion of length-functions, second-order polynomials, or limited recursion on notation, lead to the familiar notion.
The simplicity of the characterisation should make it easier to reason about feasibility in the type-two setting.

While the results of this paper are satisfactory, they do raise a lot of additional questions.
For instance we conclude that $\spt\subsetneq \mpt\subseteq\spt\circ\spt \subseteq \p$ and that at least one more inclusion must be strict (as $\mpt\subsetneq \p$).
We believe both of these and many similar inclusions combining more operators from $\spt$ or $\mpt$ to be strict.
We tried to prove the equality of $\mpt\circ\mpt$ and $\p$ early in our search for closure properties of $\mpt$.
This lead us to ideas very similar to those pursued by Seth \cite{MR1238294}, who also ends up not following the straightforward path and instead taking a detour through lambda calculus.
We do fail to produce a counterexample, though.
For instance the functional from Example~\ref{ex: filr not in p} can be written as $F(\varphi,\str a)=\Rec(\lambda \str b. \varphi(\varphi(\str b)), \epsilon, \varphi(\epsilon), \str a)$.
Both the limited recursor $\Rec$ and $\lambda \str b.\varphi(\varphi(\str b))$ are from $\mpt$.

This leads to another goal, namely a more direct proof of the closure results that may provide a concrete decomposition into few elements of \spt or \mpt.
The number of components needed may provide a measure for the complexity of a task that resolves finer than polynomial-time.
This is in particular interesting as there does not exist a second-order substitute for the degree of a polynomial or even linearity.

Finally, the results of this paper lead naturally to questions of related characterizations. One such question is whether notions of length- and lookahead-revision may be applied in a linguistic setting, with either a scheme-based or applied lambda-calculus approach.
Recently, Kapron and Steinberg have answered this question in the affirmative by giving a characterziation of second-order polynomial time in an applied lambda-calculus with type-two functionals that capture \demph{length-iteration with bounded oracle revision} \cite{KS19}.
Another related question is whether it is possible to give a characterization using an implicit complexity approach, either in the style of Bellantoni and Cook \cite{MR1190824} or Leivant \cite{Leivant93}.
A positive answer has been given in this case as well.
Hainry et. al. recently presented an imperative programming language for type-two polynomial time computation using a tiered type system that controls the flow of data and also enforces finite lookahead revision \cite{HKMP}.
These results build on previous work of Marion \cite{MR2858884}.

\paragraph*{Acknowledgements}
The second author would like to thank Eike Neumann and Matthias Schr\"oder for extended discussion on several topics related to the content of this paper.

\bibliographystyle{plainurl}
\bibliography{LICSbib}
\newpage

\appendix
\hidefromtoc
\section[Lambda-Definability]{$\lambda$-Definability}\label{ap: lambda}

The treatment of the typed $\lambda$-calculus here follows that in
\cite{MR1241248}.

The set of {\em types} is defined inductively as follows:
\begin{itemize}
\item
0 is a type
\item
$(\sigma \rightarrow \tau)$ is a type, if $\sigma$ and $\tau$ are types.
\end{itemize}

The set $Fn(\tau)$ of  \demph{functionals of type} $\tau$ is defined
by induction on $\tau$:
\begin{itemize}
\item
$Fn(0) = \S$
\item
$Fn(\sigma \rightarrow \tau) = \{F | F: Fn(\sigma) \rightarrow Fn(\tau)\}$.
\end{itemize}
It is not hard to show that each type $\tau$ has a unique normal form
\[
\tau = \tau_1 \rightarrow \tau_2 \rightarrow \cdots \rightarrow \tau_k \rightarrow 0
\]
where the missing parentheses are put in with association to the right.
Hence a functional $F$ of type $\tau$ is considered in a natural
way as a function of variables $X_1, \ldots , X_k$, with $X_i$
ranging over $Fn(\tau_i)$, and returning a natural number value:
$$F(X_1)(X_2) \ldots (X_k) = F(X_1, \ldots , X_k).$$

The \demph{level} of a type is defined inductively: The
level of type 0 is 0, and the level of the type $\tau$ written in the above normal form
is 1 + the maximum of the levels of $\tau_1, \ldots, \tau_k$.

Let $\mathbf{X}$ be a collection of type-2 functionals, that is a collection such that each functional may be assigned a type $\sigma$ with level 2.
The set of $\lambda$-\demph{terms} over $\mathbf{X}$, denoted $\lambda(\mathbf{X})$ is
defined as follows:
\begin{itemize}
\item
For each type $\sigma$ there are infinitely many
variables $X^\sigma, Y^\sigma, Z^\sigma, \ldots$ of type $\sigma$, and each
such variable is a term of type $\sigma$.
\item
For each functional $F$ (of type $\sigma$) in $\mathbf{X}$ there is a term $F^\sigma$ of type $\sigma$.
\item
If $T$ is a term of type $\tau$ and $X$ is a variable
of type $\sigma$, then $(\lambda X.T)$ is a term of type
$(\sigma \rightarrow \tau)$ (an abstraction).
\item
If $S$ is a term of type $(\sigma \rightarrow \tau)$ and $T$ is a term
of type $\sigma$, then $(ST)$ is a term of type $\tau$
(an application).
\end{itemize}

For readability, we write $S(T)$ for $(ST)$; we also write $S(T_1, \ldots , T_k)$ for $(\ldots ((ST_1)T_2) \ldots T_k)$, and $\lambda X_1 \ldots \lambda X_k.T$ for $(\lambda X_1.(\lambda X_2.( \ldots (\lambda X_k.T) \ldots )))$.

The set of free variables of a lambda term can be defined inductively and are those that are not bound by a lambda abstraction.
A term is called closed, if it has no free variables.
In a natural way each closed $\lambda$-term $T$ of type $\tau$ represents a functional in $Fn(\tau)$.
This correspondence is demonstrated in the standard way, by showing that a mapping of variables to functionals with corresponding type can be extended to a mapping of terms to functionals with corresponding type.

An \demph{assignment} is a mapping $\phi$ taking variables to functionals with corresponding type.
Suppose $\phi$ is an assignment and $T$ a $\lambda$-term over $\mathbf X$.
The value $\mathcal{V}_\phi(T)$ of $T$ with respect to $\phi$ is defined by induction on $T$ as follows.

When $T$ is a variable, $\mathcal{V}_\phi(T)$ is just $\phi(T)$.
If $T = F^\sigma$ is a constant symbol for some $F\in \mathbf X$, then $\mathcal{V}_\phi(T) = F$.

Suppose that $\tau = \tau_1 \rightarrow \ldots \rightarrow \tau_k \rightarrow 0$.
When $T$ has the form $\lambda X^\sigma. S^\tau$, $F$ is a type $\sigma$ functional and $F_i$ are type $\tau_i$ functionals, then
\[ \mathcal{V}_\phi(T)(F,F_1, \ldots, F_k)
:= \mathcal{V}_{\phi'}(S)(F_1, \ldots, F_k), \]
where $\phi'(X^\sigma) = F$, but $\phi'$ is otherwise identical to
$\varphi$.
When $T$ has the form $S^{\sigma \rightarrow \tau}R^\sigma$,
\[ \mathcal{V}_\varphi(T)(F_1, \ldots, F_k) =
\mathcal{V}_\varphi(S)(\mathcal{V}_\varphi(R), F_1, \ldots, F_k).\hspace{.2in}\Box \]

It is not hard to show that if $T,S$ are terms such that $T$ is a $\beta$
or $\eta$ redex and $S$ is its contractum, then for all $\phi$,
$\mathcal{V}_\phi(T) = \mathcal{V}_\phi(S)$.

A functional $F$ is \demph{represented} by a term $T$ relative to an assignment $\phi$ if $F = \mathcal{V}_\phi(T)$.

\end{document}